\newcommand{\textb}[1]{\textcolor{black}{#1}}
\def\beq{\begin{equation}}
\def\eeq{\end{equation}}
\def\beqa{\begin{eqnarray}}
\def\eeqa{\end{eqnarray}}
\def\beqan{\begin{eqnarray*}}
\def\eeqan{\end{eqnarray*}}
\def\Z{{\mathbb{Z}}}
\def\R{{\mathbb{R}}}
\def\argmin{\mathop{\mathrm{arg\,min}}}
\def\Diag{\mathop{\mathrm{Diag}}}
\def\diag{\mathop{\mathrm{diag}}}
\newcommand{\gradient}{\boldsymbol{\nabla}} 
\newcommand{\Hessian}{\boldsymbol{\mathcal{H}}} 
\DeclareMathOperator{\prox}{prox} 
\DeclareMathOperator{\sgn}{sgn} 
\def\x{\times}
\newcommand*{\Cdot}{\raisebox{-0.5ex}{\scalebox{2.0}{$\cdot$}}}
\newcommand*\dif{\mathop{}\!\mathrm{d}} 
\newtheorem{definition}{Definition}
\newtheorem{theorem}{Theorem}
\newtheorem{assumption}{Assumption}
\def\bhat{\widehat{b}}
\def\arr{\rightarrow}
\def\Exp{\mathbb{E}}
\def\var{\mathop{\mathrm{var}}}
\def\tm1{t\! - \! 1}
\def\tp1{t\! + \! 1}
\newcommand{\zero}{\mathbf{0}}
\newcommand{\dbf}{\mathbf{d}}
\newcommand{\pbf}{\mathbf{p}}
\newcommand{\qbf}{\mathbf{q}}
\newcommand{\rbf}{\mathbf{r}}
\newcommand{\sbf}{\mathbf{s}}
\newcommand{\sbfhat}{\widehat{\mathbf{s}}}
\newcommand{\ubf}{\mathbf{u}}
\newcommand{\vbf}{\mathbf{v}}
\newcommand{\xbf}{\mathbf{x}}
\newcommand{\xbfhat}{\widehat{\mathbf{x}}}
\newcommand{\ybf}{\mathbf{y}}
\newcommand{\zbf}{\mathbf{z}}
\newcommand{\zbfhat}{\widehat{\mathbf{z}}}
\newcommand{\Abf}{\mathbf{A}}
\newcommand{\Ibf}{\mathbf{I}}
\newcommand{\Pbf}{\mathbf{P}}
\newcommand{\Qbf}{\mathbf{Q}}
\newcommand{\Sbf}{\mathbf{S}}
\def\mubf{{\boldsymbol \mu}}
\def\xibf{{\boldsymbol \xi}}
\def\taubf{{\boldsymbol \tau}}
\newcommand{\indic}[1]{\mathbbm{1}_{ \{ {#1} \} }}
\tikzstyle{block}=[rectangle,draw, fill=blue!20,
\tikzstyle{signal}=[coordinate,draw]
\title{Fixed Points of Generalized Approximate Message Passing
with Arbitrary Matrices}
\author{
Sundeep Rangan, 
Philip Schniter, 
Erwin Riegler, 
Alyson K. Fletcher,
Volkan Cevher
\thanks{S. Rangan (email: srangan@nyu.edu) is with
      the Department of Electrical and Computer Engineering,
      New York University, Brooklyn, NY.
      The work of S. Rangan was supported by the National Science Foundation under grants CCF-1116589 and IIP-1237821 as well as generous support from NYU WIRELESS affiliate memberships.}
\thanks{P.~Schniter (email: schniter@ece.osu.edu) is with
      the Department of Electrical and Computer Engineering,
      The Ohio State University, Columbus OH.
      The work of P. Schniter was supported by the National Science Foundation 
      under grants CCF-1018368, CCF-1218754, and CCF-1527162.}
\thanks{E.~Riegler (email: erwin.riegler@tuwien.ac.at) is with
    the Institute of Telecommunications,
    Technische Universit{\"a}t, Wien.}%
\thanks{A.~K. Fletcher (email: akfletcher@ucla.edu) is with
    the Departments of Statistics, Mathematics, and Electrical Engineering,
    University of California, Los Angeles.
    The work of A.K. Fletcher was supported by the NSF under grant CCF-1254204.}%
\thanks{V. Cevher (email: volkan.cevher@epfl.ch) is with
    Ecole Polytechnic, Lausanne, Switzerland}%
\thanks{This paper was presented in part at ISIT 2013 \cite{RanSRFC:13-ISIT}.}%
\markboth{GAMP Fixed Points}
    {Rangan et al.}
}
\begin{document}
\setlength{\arraycolsep}{0.8mm}

\maketitle
\begin{abstract}
The estimation of a random vector with independent components passed through
a linear transform followed by a componentwise (possibly nonlinear) output map
arises in a range of applications. Approximate message passing (AMP) methods,
based on Gaussian approximations
of loopy belief propagation, have recently attracted considerable attention
for such  problems. For large random transforms, these methods
exhibit fast convergence and admit precise analytic characterizations with
testable conditions for optimality, even for certain non-convex problem instances.
However, the behavior of AMP under general transforms is not fully understood.  In this paper,
we consider the Generalized AMP (GAMP) algorithm and relate the method to more common optimization techniques.  This analysis
enables a precise characterization of the GAMP algorithm fixed-points that applies to
arbitrary transforms.  In particular, we show that the fixed points
of the so-called max-sum GAMP algorithm for MAP estimation
are critical points of a constrained maximization of the posterior density.
The fixed-points of the sum-product GAMP algorithm for estimation of the posterior marginals
can be interpreted as critical points of a certain free energy.
\end{abstract}

\begin{IEEEkeywords}
Belief propagation, ADMM, variational optimization,
message passing.
\end{IEEEkeywords}

\section{Introduction}
Consider the constrained optimization problem
\beq \label{eq:xzOpt}
    (\xbfhat,\zbfhat) := \argmin_{\xbf,\zbf} F(\xbf,\zbf) \quad
        \mbox{s.t. } \zbf=\Abf\xbf,
\eeq
where $\xbf \in \R^n$, $\zbf \in \R^m$, $\Abf \in \R^{m \x n}$,
and the objective function admits a decomposition of the form
\beqa
    \lefteqn{F(\xbf,\zbf) := f_x(\xbf) + f_z(\zbf) } \nonumber \\
    & & f_x(\xbf) = \sum_{j=1}^n f_{x_j}(x_j), \quad
        f_z(\zbf) = \sum_{i=1}^m f_{z_i}(z_i), \label{eq:optsep}
\eeqa
for scalar functions $f_{x_j}(\cdot)$ and $f_{z_i}(\cdot)$.
One example where this optimization arises is the estimation problem
in Fig.~\ref{fig:linMixMod}.  Here, a random vector $\xbf$ has
independent components with densities $p_{x_j}(x_j)$
and passes through a linear transform to yield
an output $\zbf=\Abf\xbf$.
The problem is to estimate $\xbf$ and $\zbf$
from measurements $\ybf$ generated
\textb{according to a conditional density $p_{\ybf|\zbf}(\ybf|\zbf)$
that is separable as a product of}
conditional densities $p_{y_i|z_i}(y_i|z_i)$.  Under this observation model,
the vectors $\xbf$ and $\zbf$ will have a posterior joint density given by
\beq \label{eq:pxzTrue}
    p_{\xbf,\zbf|\ybf}(\xbf,\zbf|\ybf) =
         [Z(\ybf)]^{-1}e^{-F(\xbf,\zbf)}\indic{\zbf=\Abf\xbf},
\eeq
where $F(\xbf,\zbf)$ is given by \eqref{eq:optsep} when the scalar functions
are set to the negative log prior denisty and likelihood:
\[
    f_{x_j}(x_j) = -\log p_{x_j}(x_j), \quad
    f_{z_i}(z_i) = -\log p_{y_i|z_i}(y_i|z_i).
\]
Note that in \eqref{eq:pxzTrue},
$F(\xbf,\zbf)$ is implicitly a function of $\ybf$,
$Z(\ybf)$ is a normalization constant, and
the point mass $\indic{\zbf=\Abf\xbf}$ imposes the linear
constraint that $\zbf=\Abf\xbf$.
The optimization \eqref{eq:xzOpt} in this case produces the
\emph{maximum a posteriori} (MAP) estimate of $\xbf$ and $\zbf$.
In statistics, the system in Fig.~\ref{fig:linMixMod}
is sometimes referred
to as a generalized linear model \iftoggle{conference}{\cite{McCulNel:89}}{\cite{NelWed:72,McCulNel:89}}
and is used in a range of applications
including regression, inverse problems, and filtering.  Bayesian
forms of compressed sensing can also be considered in this framework
by imposing a sparse prior for the components $x_j$~\cite{RanganFG:12-IT,Eldar:Book:12}.
In all these applications,
one may instead be interested in estimating the posterior marginals
$p(x_j|\ybf)$ and $p(z_i|\ybf)$.  We relate this objective to an optimization of the
form \eqref{eq:xzOpt}-\eqref{eq:optsep} in the sequel.

\begin{figure}

\center
\begin{tikzpicture}[scale=1]
    \node (x) {$\xbf \sim p_{\xbf}(\cdot)$};
    \node [block,node distance=2.5cm]  (A)   [right of=x] {$\Abf$ };
    \node [block,node distance=2.5cm]  (pyz) [right of=A] {$p_{\ybf|\zbf}(\cdot|\cdot)$}
        edge [<-] node[auto,swap] {$\zbf$} (A);
    \node [node distance=1.5cm] (y) [right of=pyz] {$\ybf$};

    \node [below of=x,font=\footnotesize] 
        {\parbox{2.0cm}{\centering Unknown input,\\ independent components} };
    \node [below of=A,font=\footnotesize] {Linear transform};
    \node [below of=pyz,text width=1.7cm,font=\footnotesize,xshift=-0.15cm]
        {\parbox{1.7cm}{\centering Componentwise\\ \hspace{0.1cm} output map} };

    \draw [->] (x) -- (A);
    \draw [->] (pyz) -- (y);
\end{tikzpicture}
\caption{System model:  The GAMP method considered here can
be used for approximate MAP and MMSE estimation of $\xbf$ from $\ybf$. \label{fig:linMixMod} }
\end{figure}
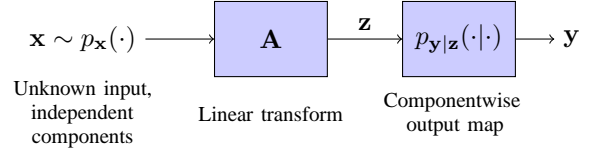

Most current numerical methods for solving the constrained
optimization problem \eqref{eq:xzOpt}
attempt to exploit the separable
structure of the objective function \eqref{eq:optsep}
either through generalizations of the iterative shrinkage and thresholding
(ISTA) algorithms
\iftoggle{conference}{\cite{ChamDLL:98,DaubechiesDM:04,WrightNF:09,BeckTeb:09,Nesterov:07,BioDFig:07}}{
\cite{ChamDLL:98,DaubechiesDM:04,VonUnser:04,WrightNF:09,BeckTeb:09,Nesterov:07,BioDFig:07}}
or alternating direction method of multipliers (ADMM) approach
\iftoggle{conference}{\cite{BoydPCPE:09,Esser:Diss:10,Chambolle:JMIV:11,He:JIS:12}}{
\cite{BoydPCPE:09,Eckstein:MP:92,Goldstein:JIS:09,Zhang:JSC:11,Combettes:MMS:05,
Tseng:91,Zhu:Tech:08,Esser:Diss:10,Chambolle:JMIV:11,He:JIS:12}}.
There are now many of these methods,
and we provide a brief review in Section~\ref{sec:optReview}.

However, in recent years, there has been considerable
interest in so-called approximate message passing (AMP) methods
based on Gaussian and quadratic approximations
of loopy belief propagation in graphical models
\cite{DonohoMM:09,DonohoMM:10-ITW1,DonohoMM:10-ITW2,BayatiM:11,Rangan:10-CISS,Rangan:11-ISIT}.
The main appealing feature of the AMP
algorithms is that for certain large random matrices $\Abf$,
the asymptotic
behavior of the algorithm can be rigorously and exactly predicted
with testable conditions for optimality, even for many non-convex
instances.
Moreover, in the case of these large, random matrices,
simulations appear to show very fast convergence of AMP methods
when compared
against state-of-the-art conventional optimization techniques.

Despite recent extensions to larger classes of random matrices \cite{CaireSTV:11,BayLelMon:12,cakmak2014samp},
the behavior of AMP methods under general $\Abf$ is not fully understood.
Indeed, for general $\Abf$, it is well-known that AMP methods may diverge \cite{RanSchFle:14-ISIT,Caltagirone:14-ISIT}.
While AMP has been successfully applied in a range of applications
\cite{FletcherRVB:11,chen2010improved,KamilovGR:12,vila2013hyperspectral,fletcher2014scalable},
the methods often require tuning to stabilize the algorithms.
Various general procedures to stabilize AMP have also been proposed~\cite{RanSchFle:14-ISIT,Vila:ICASSP:15,manoel2014swamp,Rangan:arxiv:15}.

To better understand these convergence issues,
the broad purpose of this paper is to show that certain forms of AMP algorithms
can be seen as variants of more conventional optimization methods.
This analysis will enable a precise characterization of the fixed points
of the AMP methods that applies to arbitrary $\Abf$, and a potential
framework to understand the convergence.

Our study focuses on a Generalized AMP (GAMP) method proposed
in \cite{Rangan:11-ISIT} and rigorously analyzed in
\cite{JavMon:12-arXiv}.
We consider this algorithm
since many other variants of AMP are
special cases of this general procedure.
The GAMP method has two common versions:  max-sum GAMP for
the MAP estimation of the vectors $\xbf$ and $\zbf$
for the problem in Fig.~\ref{fig:linMixMod}; and
sum-product GAMP for approximate inference of the
posterior marginals.

For both versions of GAMP, the algorithms produce estimates $\xbf$
and $\zbf$ along with certain ``quadratic'' terms.  Our first main result
 (Theorem~\ref{thm:gampMS}) shows that
the fixed points $(\xbfhat,\zbfhat)$ of max-sum GAMP
are critical points of the optimization \eqref{eq:xzOpt}.
In addition, the quadratic terms can be considered
as diagonal approximations of the inverse Hessian of the objective function.
For sum-product GAMP, we show (Theorem~\ref{thm:gampSP})
that the algorithm's fixed points are stationary points of a certain energy
function.

A conference version of this paper appeared in \cite{RanSRFC:13-ISIT}.
This paper includes all the proofs and more extensive discussion regarding relations
between GAMP and classic optimization and free energy minimization techniques.
In addition, since the publication of the conference version of this paper
in \cite{RanSRFC:13-ISIT}, several other works such as
\cite{Krzakala:14-ISITbethe,kabashima2014phase,RanSchFle:14-ISIT,Rangan:arxiv:15}
have built on the ideas and these are also discussed.

\section{Review of GAMP and Related Methods} \label{sec:optReview}

\subsection{Generalized Approximate Message Passing} \label{sec:gamp}

Graphical-model methods \cite{WainwrightJ:08}
are a natural approach to the optimization problem \eqref{eq:xzOpt}
given the separable structure of the objective function \eqref{eq:optsep}.
However, traditional graphical model techniques such as loopy belief propagation (loopy BP)
\textb{are computationally attractive only when}
the constraint matrix $\Abf$ is sparse.
Approximate message passing (AMP) refers to a class of Gaussian and quadratic approximations
of loopy BP that can be applied to dense $\Abf$.
AMP approximations of loopy BP originated in CDMA multiuser detection problems
\iftoggle{conference}{\cite{BoutrosC:02}}{\cite{BoutrosC:02,TanakaO:05,GuoW:06}} and have received considerable recent
attention in the context of compressed sensing \iftoggle{conference}{\cite{DonohoMM:09,DonohoMM:10-ITW1,DonohoMM:10-ITW2,BayatiM:11,Rangan:10-CISS, Rangan:11-ISIT}}{\cite{DonohoMM:09,DonohoMM:10-ITW1,DonohoMM:10-ITW2,BayatiM:11,Rangan:10-CISS, Rangan:11-ISIT,Montanari:12-bookChap}}.
The Gaussian approximations used in AMP are also closely related to
expectation propagation techniques \iftoggle{conference}{\cite{Minka:01}}{\cite{Minka:01,Seeger:08}}.

In this work, we study the so-called Generalized AMP (GAMP) algorithm
\cite{Rangan:11-ISIT} rigorously analyzed in \cite{JavMon:12-arXiv}.
The procedure, shown in Algorithm~\ref{algo:gamp}, produces
a sequence of estimates $(\xbf^t,\zbf^t)$ \textb{of $(\xbf,\zbf)$} along with the
\emph{quadratic terms} 
\textb{ $\taubf_x^t, \taubf^t_r \in \R_+^n$ and $\taubf^t_z, \taubf^t_p, \taubf^t_s \in \R_+^m$, where $t \in \Z_+$ represents the iteration number. 
Here and in the sequel, we use ``$.$'' to denote componentwise vector multiplication and ``$./$'' to denote componentwise vector division.}

\begin{algorithm}
\caption{Generalized Approximate Message Passing (GAMP)}
\begin{algorithmic}[1]  \label{algo:gamp}
\REQUIRE{ Matrix $\Abf\textb{\in\R^{m\times n}}$,
functions $f_x(\xbf),f_z(\zbf)\textb{\in\R}$,
and algorithm choice \texttt{MaxSum} or \texttt{SumProduct}. }

\STATE{ $t \gets 0$  }
\STATE{ Initialize $\xbf^t\textb{\in\R^n}$, $\taubf_x^t\textb{\in\R_+^n}$  }
\STATE{ $\sbf^{\tm1} \gets \textb{\mathbf{0}\in\R^m}$ }
\STATE{ $\Sbf \gets \Abf.\Abf$ (componentwise square)} \label{line:Sdef}
\REPEAT

    \STATE{ } \COMMENT{Output node update}
    \STATE{ $\taubf_p^t \gets \Sbf\taubf_x^t$ }
        \label{line:taup}
    \STATE{ $\pbf^t \gets \Abf\xbf^t -
        \sbf^{\tm1}.\taubf_p^t$ } \label{line:phat}
    \IF {\texttt{MaxSum}}
        \STATE{ $\zbf^t \gets \prox_{\taubf_p^t f_z}(\pbf^t)$}
            \label{line:zhatMS}
        \STATE{ $\taubf_z^t \gets \taubf_p^t . \prox'_{\taubf_p^t f_z}(\pbf^t)$}
            \label{line:tauzMS}
    \ELSIF {\texttt{SumProduct}}
        \STATE{ $\zbf^t \gets \Exp(\zbf|\pbf^t,\taubf_p^t)$}
            \label{line:zhatSP}
        \STATE{ $\taubf_z^t \gets \var(\zbf|\pbf^t,\taubf_p^t)$}     \label{line:tauzSP}
    \ENDIF
    \STATE{ $\sbf^t \gets (\zbf^t-\pbf^t)./\taubf_p^t$ }
            \label{line:shat}
    \STATE{ $\taubf_s^t \gets (\mathbf{1}- \taubf_z^t./\taubf_p^t)./\taubf_p^t $ }
           \label{line:taus}

    \STATE{ }
    \STATE{ } \COMMENT{Input node update}
    \STATE{ $\taubf_r^t \gets \mathbf{1}./(\Sbf^T\taubf_s^t)$ }
            \label{line:taur}
    \STATE{ $\rbf^t \gets \xbf^t + \taubf_r^t .\Abf^T\sbf^t$ } \label{line:rhat}
    \IF {\texttt{MaxSum}}
        \STATE{ $\xbf^{\tp1} \gets \prox_{\taubf_r^t f_x}(\rbf^t)$}
	    \label{line:xhatMS}
        \STATE{ $\taubf_x^{\tp1} \gets \taubf_r^t . \prox'_{\taubf_r^t f_x}(\rbf^t)$}
            \label{line:tauxMS}
    \ELSIF {\texttt{SumProduct}}
        \STATE{ $\xbf^{\tp1} \gets \Exp(\xbf|\rbf^t,\taubf_r^t)$}
            \label{line:xhatSP}
        \STATE{ $\taubf_x^{\tp1} \gets \var(\xbf|\rbf^t,\taubf_r^t)$}
            \label{line:tauxSP}
    \ENDIF

\UNTIL{Terminated}

\end{algorithmic}
\end{algorithm}

We focus on two variants of the GAMP algorithm: \emph{max-sum GAMP} and
\emph{sum-product GAMP}.

\paragraph*{Max-sum GAMP}
In the max-sum version of the algorithm,
the outputs $(\xbf^t,\zbf^t)$ represent
estimates of the solution to the optimization problem \eqref{eq:xzOpt},
or equivalently the MAP estimates for the posterior \eqref{eq:pxzTrue}.
Since the objective function has the separable form \eqref{eq:optsep},
each iteration of the algorithm involves four componentwise update steps:
the proximal updates shown in lines \ref{line:zhatMS} and \ref{line:xhatMS},
where
\beq \label{eq:proxDef}
    \prox_f(v) := \argmin_{u \in \R} f(u) + \frac{1}{2}(u-v)^2,
\eeq
and lines \ref{line:tauzMS} and \ref{line:tauxMS},
involving the derivative of the proximal operator from \eqref{eq:proxDef}.

In particular, lines \ref{line:zhatMS} and \ref{line:tauzMS} are to be interpreted as
\begin{align}
    z_i^t &= \prox_{\tau_{p_i}^t f_{z_i}}(p_i^t), ~~~i=1,\dots,m,\\
    \tau_{z_i}^t &= \tau_{p_i}^t\prox'_{\tau_{p_i}^t f_{z_i}}(p_i^t), ~~~i=1,\dots,m,\\
    &=\tau_{p_i}^t\Big(1+\tau_{p_i}^t \frac{\partial^2 f_{z_i}(z_i^t)}{\partial z_i^2}\Big)^{-1}, ~~~i=1,\dots,m, \label{eq:proxderiv}
\end{align}
with similar interpretations for lines \ref{line:xhatMS} and \ref{line:tauxMS}.
Thus, max-sum GAMP reduces the vector-valued optimization \eqref{eq:xzOpt} to a sequence of scalar optimizations.

\textb{When discussing max-sum GAMP, we will assume that both $f_x$ and $f_z$ are twice differentiable and convex, so that the outputs of the proximal operator and its derivative exist and are unique.
We make these assumptions for the sake of clarity, but note that---in practice---GAMP is often used with non-differentiable functions.
A common example is when $f_x(\xbf)=\lambda\|\xbf\|_1$ for $\lambda>0$, in which case 
\begin{align}
\prox_{\tau_{r_j}^t f_{x_j}}(r_j^t) 
&= \sgn(r_j^t)\max\{|r_j^t|-\lambda\tau_{r_j}^t,0\}
\label{eq:soft_thresh}
\end{align}
and 
\begin{align}
\prox'_{\tau_{r_j}^t f_{x_j}}(r_j^t) 
= \begin{cases}
1, & |r_j^t| > \lambda\tau_{r_j}^t; \\
0, & |r_j^t| < \lambda\tau_{r_j}^t .
\end{cases} 
\end{align}
Although $\prox'_{\tau_{r_j}^t f_{x_j}}(r_j^t)$ is undefined when $r_j^t=\lambda\tau_{r_j}^t$, its value can be set to either $0$ or $1$ with minimal effect, because the event $r_j^t=\lambda\tau_{r_j}^t$ almost never occurs (due, e.g., to the presence of noise in $r_j^t$).
The rigorous GAMP analysis \cite{JavMon:12-arXiv} assumes only that the prox functions in lines \ref{line:zhatMS} and \ref{line:xhatMS} are Lipschitz continuous (and hence differentiable almost everywhere).
}

\paragraph*{Sum-product GAMP}
The purpose of the
sum-product GAMP algorithm is to provide estimates of the posterior marginals
\beq \label{eq:pxzMarg}
    p(x_j|\ybf), \quad p(z_i|\ybf),
\eeq
from the joint density \eqref{eq:pxzTrue}.  Exact computation of these
marginal densities is, in general, computationally intractable.
Sum-product GAMP instead provides estimates of these densities.
Specifically, at each iteration $t$, it forms the estimated densities,
called \emph{beliefs}, given by:
\beq \label{eq:bxzSep}
    b^t_{x_j}(x_j) = p(x_j|r_j^t,\tau_{r_j}^t), \quad
    b^t_{z_i}(z_i) = p(z_i|p_i^t,\tau_{p_i}^t),
\eeq
where we use the notation
\begin{subequations} \label{eq:pxzrpSep}
\beqa
    p(x_j|r_j,\tau_{r_j}) & \propto & \exp\left[ -f_{x_j}(x_j) -
        \textstyle\frac{1}{2\tau_{r_j}}(x_j-r_j)^2 \right], \\
    p(z_i|p_i,\tau_{p_i}) & \propto & \exp\left[ -f_{z_i}(z_i) -
        \textstyle\frac{1}{2\tau_{p_i}}(z_i-p_i)^2 \right].
\eeqa
\end{subequations}
As we will discuss in Section~\ref{sec:sumProdFP},
these belief  estimates can be ``derived" as estimates of the minima
of a certain large system limit of the Bethe Free Energy.

Now, the products of the densities in \eqref{eq:pxzrpSep}
are given by
\begin{subequations} \label{eq:pxzrp}
\beqa
    p(\xbf|\rbf,\taubf_r) &=& \prod_{j=1}^n
    p(x_j|r_j,\tau_{r_j}) \nonumber \\
    &\propto& \exp\left[ -f_{x}(\xbf) -
        \frac{1}{2}\|\xbf-\rbf\|_{\taubf_r}^2 \right], \\
    p(\zbf|\pbf,\taubf_p) &=&
    \prod_{i=1}^m p(z_i|z_i,\tau_{p_i}) \nonumber \\
    &\propto&
    \exp\left[ -f_{z}(\zbf) -
        \frac{1}{2}\|\zbf-\pbf\|_{\taubf_p}^2 \right],
\eeqa
\end{subequations}
where, for any vectors $\vbf\in\R^r$ and $\taubf \in \R^r$ with $\taubf>0$,
we use the notation
\[
    \|\vbf\|^2_\taubf := \sum_{i=1}^r \frac{|v_i|^2}{\tau_i}.
\]
In the sum-product version of GAMP,
the expectations and variances in lines
\ref{line:zhatSP}, \ref{line:tauzSP},
\ref{line:xhatSP} and \ref{line:tauxSP}
of Algorithm~\ref{algo:gamp}
are to be
taken with respect to the probability density functions in
\eqref{eq:pxzrp}.
Thus, $\xbf^t$ and $\taubf_x^t$ are the estimates of the posterior
means and variances of the components of $\xbf$ and
$\zbf^t$ and $\taubf_z^t$ are the
estimates of the posterior means and variances of the components of $\zbf$.

Since the densities \eqref{eq:pxzrp} are separable, the expectations
and variances can be computed via scalar integrals.  Thus, the sum-product
GAMP algorithm reduces the vector-valued to marginalization problem
to a sequence of scalar estimation problems.

\subsection{Iterative Shrinkage and Thresholding Algorithm} \label{sec:ista}

The goal in the paper is to relate the GAMP method to more conventional
optimization techniques.  One of the more common of such approaches is a generalization
of the Iterative Shrinkage and Thresholding Algorithm (ISTA) shown in
Algorithm~\ref{algo:ISTA} \iftoggle{conference}{\cite{ChamDLL:98,DaubechiesDM:04,WrightNF:09,BeckTeb:09}}{
\cite{ChamDLL:98,DaubechiesDM:04,VonUnser:04,WrightNF:09,BeckTeb:09}},
\textb{where $\gradient f$ denotes the gradient of $f$.}

\begin{algorithm}
\caption{Iterative Shrinkage and Thresholding Algorithm (ISTA)}
\begin{algorithmic}[1]  \label{algo:ISTA}
\REQUIRE{ Matrix $\Abf$, scalar $c\geq 0$, functions $f_x(\cdot)$, $f_z(\cdot)$. }

\STATE{ $t \gets 0$  }
\STATE{ Initialize $\xbf^t$.  }
\REPEAT
    \STATE{ $\zbf^t \gets \Abf\xbf^t$ } \label{line:zISTA}
    \STATE{ $\qbf^t \gets \gradient f_z(\zbf^t)$ } \label{line:qISTA}
    \STATE{ $\xbf^{\tp1} \gets \argmin_{\xbf} f_x(\xbf) + (\qbf^t)^T\Abf\xbf +
        (c/2)\|\xbf-\xbf^t\|^2$ } \label{line:xminISTA}
\UNTIL{Terminated}
\end{algorithmic}
\end{algorithm}

The algorithm is built on the idea that, at each iteration $t$, the second cost term in the minimization $\argmin_{\xbf}f_x(\xbf)+f_z(\Abf\xbf)$ specified by \eqref{eq:xzOpt} is replaced by a quadratic majorizing cost $g_z(\xbf)\geq f_z(\Abf\xbf)$ that coincides at the point $\xbf=\xbf^t$ (i.e., $g_z(\xbf^t)=f_z(\Abf\xbf^t)$).
\textb{The function $g_z(\xbf)$ defined implicitly in line \ref{line:xminISTA} achieves this}
majorization via appropriate choice of $c>0$.
This approach is motivated by the fact that, if $f_x(\xbf)$ and $f_z(\zbf)$ are both
separable, as in \eqref{eq:optsep}, then both the gradient in line \ref{line:qISTA}
and minimization in line \ref{line:xminISTA} can be performed componentwise.
Moreover, when $f_x(\xbf) = \lambda\|\xbf\|_1$, as in the LASSO problem \cite{Tibshirani:96}, the minimization in line \ref{line:xminISTA} can be computed directly via the
shrinkage and thresholding operation \textb{\eqref{eq:soft_thresh}}---hence the name of the algorithm.
The convergence of the ISTA method tends to be slow, but a number of enhanced methods
have been successful and widely-used
\cite{WrightNF:09,BeckTeb:09,Nesterov:07,BioDFig:07}.

\subsection{Alternating Direction Method of Multipliers}
\label{sec:alMeth}

A second common class of methods is built around the Alternating Direction Method of Multipliers (ADMM) \cite{BoydPCPE:09} approach
shown in Algorithm~\ref{algo:ADMM}.
The Lagrangian for the optimization problem \eqref{eq:xzOpt} is given by
\beq \label{eq:Lag}
    L(\xbf,\zbf,\sbf) := F(\xbf,\zbf) + \sbf^T(\zbf-\Abf\xbf),
\eeq
where $\sbf$ are the dual parameters.
ADMM attempts to produce a sequence of estimates $(\xbf^t,\zbf^t,\sbf^t)$
that converge to a saddle point of the Lagrangian \eqref{eq:Lag}.
The parameters of the algorithm are a step-size $\alpha>0$
and the penalty terms $Q_z(\cdot)$ and $Q_x(\cdot)$, which
classical ADMM would choose as
\begin{subequations}\label{eq:QADMM}
\beqa
     Q_x(\xbf,\xbf^t,\zbf^t,\alpha) &=& \frac{\alpha}{2}\|\zbf^t-\Abf\xbf\|^2 \label{eq:QxADMM} \\
     Q_z(\zbf,\zbf^t,\xbf^{\tp1},\alpha) &=& \frac{\alpha}{2}\|\zbf-\Abf\xbf^{\tp1}\|^2. \label{eq:QzADMM}
\eeqa
\end{subequations}
\vspace{-3mm}

\begin{algorithm}
\caption{Alternating Direction Method of Multipliers (ADMM)}
\begin{algorithmic}[1]  \label{algo:ADMM}
\REQUIRE{ $\Abf$, $\alpha$, functions $f_x(\cdot)$, $f_z(\cdot)$, $Q_x(\cdot)$, $Q_z(\cdot)$ }

\STATE{ $t \gets 0$  }
\STATE{ Initialize $\xbf^t$, $\zbf^t$, $\sbf^t$  }
\REPEAT
    \STATE{ $\xbf^{\tp1} \gets \argmin_{\xbf} L(\xbf,\zbf^t,\sbf^t)
        + Q_x(\xbf,\xbf^t,\zbf^t,\alpha)$ } \label{line:xminAL}
    \STATE{ $\zbf^{\tp1} \gets \argmin_{\zbf} L(\xbf^{\tp1},\zbf,\sbf^t)
        + Q_z(\zbf,\zbf^t,\xbf^{\tp1},\alpha)$ } \label{line:zminAL}
    \STATE{ $\sbf^{\tp1} \gets \sbf^t + \alpha(\zbf^{\tp1}-\Abf\xbf^{\tp1})$ }
        \label{line:sAL}
\UNTIL{Terminated}

\end{algorithmic}
\end{algorithm}

When the objective function admits a separable form \eqref{eq:optsep} and one uses the auxiliary function $Q_z(\cdot)$ in \eqref{eq:QzADMM}, the $\zbf$-minimization in line \ref{line:zminAL} separates into $m$ scalar optimizations.
However, due to the quadratic term $\|\Abf\xbf\|^2$ in \eqref{eq:QxADMM}, the $\xbf$-minimization in line \ref{line:xminAL} does not separate for general $\Abf$.
To circumvent this problem, one might consider a separable inexact $\xbf$-minimization, since many inexact variants of ADMM are known to converge \iftoggle{conference}{\cite{BoydPCPE:09}}{\cite{Eckstein:MP:92}}.
For example, $Q_x(\cdot)$ might be chosen to yield separability while majorizing the original cost in line \ref{line:xminAL}, as was done for ISTA's line \ref{line:xminISTA}%
\iftoggle{conference}{.}{, i.e.,
\begin{align}
\lefteqn{Q_x(\xbf,\xbf^t,\zbf^t,\alpha)}\\
&= \frac{\alpha}{2}\|\zbf^t-\Abf\xbf\|^2
+\frac{1}{2}(\xbf-\xbf^t)^T(c\Ibf-\alpha\Abf^T\Abf)(\xbf-\xbf^t) \nonumber
\end{align}
with \textb{$c\geq \alpha\|\Abf\|^2$}, after which ADMM's line \ref{line:xminAL} would become
\begin{align}\label{eq:xminIAL}
\argmin_{\xbf} f_x(\xbf) + \frac{c}{2}\Big\|\xbf-\xbf^t+\frac{\alpha}{c}\Abf^T\Big(\Abf\xbf^t-\zbf^t-\frac{1}{\alpha}\sbf^t\Big)\Big\|^2 .
\end{align}
\textb{This approach is known as ``linearized ADMM'' \cite{ParikhB:13}, or as ``split inexact Uzawa'' \cite{Zhang:JSC:11} in the optimization literature}, and it has close connections to other well-known techniques like
Douglas--Rachford splitting \cite{Eckstein:MP:92},
split Bregman \cite{Goldstein:JIS:09},
proximal forward-backward splitting \cite{Combettes:MMS:05},
and various primal-dual algorithms \cite{Tseng:91,Zhu:Tech:08,Esser:Diss:10,Chambolle:JMIV:11,He:JIS:12}.}
Many other choices of penalty $Q_x(\cdot)$ have also been considered in the literature (see, e.g., the overview in \cite{Esser:Diss:10}).

Other variants of ADMM are also possible \cite{BoydPCPE:09}.
For example, the step-size $\alpha$ might vary with the iteration $t$,
or the penalty terms might have the form $(\zbf-\Abf\xbf)^T\Pbf(\zbf-\Abf\xbf)$ for positive semidefinite $\Pbf$.
As we will see, these generalizations provide a connection to GAMP.

\section{Fixed-Points of Max-Sum GAMP} \label{sec:maxSumFP}

Our first result connects the max-sum GAMP algorithm
to inexact ADMM.
\iftoggle{conference}{}{  Given points $(\xbf,\zbf)$, define the matrices
\begin{subequations} \label{eq:Qxz}
\beqa
\label{eq:Qxz_a}
    \Qbf_x &:=& \Bigl(\Diag(\dbf_x)
        + \Abf^T\Diag(\dbf_z)\Abf \Bigr)^{-1} \\
    \Qbf_z &:=& \Bigl(\Diag(\dbf_z)^{-1}
        + \Abf\Diag(\dbf_x)^{-1}\Abf^T \Bigr)^{-1}
\eeqa
\end{subequations}
where \textb{$\Diag(\dbf)$ denotes the diagonal matrix with diagonal entries equal to those in the vector $\dbf$}, and where $\dbf_x$ and $\dbf_z$ contain the componentwise second derivatives, \textb{i.e., the diagonals of the Hessian matrices}
\beq \label{eq:dxz}
    \dbf_x := \diag\left[ \Hessian f_x(\xbf) \right], \quad
    \dbf_z := \diag\left[ \Hessian f_z(\zbf) \right].
\eeq
Note that when $f_x$ and $f_z$ are strictly convex,
the elements in $\dbf_x$ and $\dbf_z$ are positive.
Observe that
the matrix $\Qbf_x$ in \eqref{eq:Qxz_a} is the inverse
Hessian of the objective function $F(\xbf,\zbf)$ constrained
to $\zbf=\Abf\xbf$. That is,
\[
    \Qbf_x = \left[ \Hessian_{\xbf} F(\xbf,\Abf\xbf) \right]^{-1}.
\]
}

\begin{theorem} \label{thm:gampMS}
The outputs of the max-sum GAMP version of Algorithm~\ref{algo:gamp}
satisfy the recursions
\begin{subequations}  \label{eq:lagMS}
\beqa
    \xbf^{\tp1} &=& \argmin_{\xbf} \Big[
    L(\xbf,\zbf^t,\sbf^t)  +
     \frac{1}{2}\|\xbf-\xbf^t\|^2_{\taubf_r^t} \Big]
     \label{eq:xiterMS} \\
    \zbf^{\tp1} &=& \argmin_{\zbf} \Big[ L(\xbf^{\tp1},\zbf,\sbf^t)
    + \frac{1}{2} \|\zbf-\Abf\xbf^{\tp1}\|^2_{\taubf_p^{\tp1}}  \Big]
        \qquad \label{eq:ziterMS}\\
    \sbf^{\tp1} &=& \sbf^t + (\zbf^{\tp1}-\Abf\xbf^{\tp1})./\taubf_p^{\tp1}
      \label{eq:siterMS}
\eeqa
\end{subequations}
where $L(\xbf,\zbf,\sbf)$ is the Lagrangian defined in \eqref{eq:Lag}.

Now suppose that $(\xbfhat,\zbfhat,\sbf,\taubf_x,\taubf_s)$
is a fixed point of the algorithm (where the ``hats" on $\xbfhat$ and $\zbfhat$
are used to distinguish them from free variables).  Then, this fixed point is a critical point of the constrained optimization \eqref{eq:xzOpt} in that $\zbfhat = \Abf\xbfhat$
and
\beq \label{eq:LagFPMS}
    \gradient_{\xbf}
        L(\xbfhat,\zbfhat,\sbf) = \zero, \quad
    \gradient_{\zbf}
        L(\xbfhat,\zbfhat,\sbf) = \zero.
\eeq

Moreover, the quadratic terms $\taubf_x,\taubf_s$ are the \emph{approximate
diagonals} (as defined in Appendix~\ref{sec:approxDiag}) of
$\Qbf_x$ and
$\Qbf_z$ in \eqref{eq:Qxz}
at $(\xbf,\zbf)=(\xbfhat,\zbfhat)$.
\end{theorem}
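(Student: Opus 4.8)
The plan is to prove the three assertions in turn: the recursion identity \eqref{eq:lagMS}, the fixed-point characterization \eqref{eq:LagFPMS}, and the approximate-diagonal interpretation of $\taubf_x,\taubf_s$.

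First I would establish \eqref{eq:lagMS} by direct algebraic reduction of the proximal updates in lines \ref{line:zhatMS} and \ref{line:xhatMS}. Expanding the Lagrangian \eqref{eq:Lag}, the objective in \eqref{eq:xiterMS} equals $f_x(\xbf) - (\sbf^t)^T\Abf\xbf + \tfrac12\|\xbf-\xbf^t\|^2_{\taubf_r^t}$ up to terms independent of $\xbf$. Completing the square componentwise in the weighted norm absorbs the linear term $-(\sbf^t)^T\Abf\xbf$ into the quadratic, showing the minimizer is $\prox_{\taubf_r^t f_x}$ evaluated at $\xbf^t + \taubf_r^t\Abf^T\sbf^t$, which is exactly $\rbf^t$ from the algorithm. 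The same completing-the-square argument applied to \eqref{eq:ziterMS} reproduces $\pbf^{\tp1} = \Abf\xbf^{\tp1} - \taubf_p^{\tp1}\sbf^t$ (line \ref{line:phat} at the next iteration) and $\zbf^{\tp1} = \prox_{\taubf_p^{\tp1}f_z}(\pbf^{\tp1})$. Finally, \eqref{eq:siterMS} follows by substituting $\pbf^{\tp1}$ into the definition $\sbf^{\tp1} = (\zbf^{\tp1}-\pbf^{\tp1})/\taubf_p^{\tp1}$ of line \ref{line:shat}. This first part is entirely routine. To obtain \eqref{eq:LagFPMS}, I would then impose stationarity of the iterates: $\xbf^{\tp1}=\xbf^t=\xbfhat$, $\zbf^{\tp1}=\zbf^t=\zbfhat$, $\sbf^{\tp1}=\sbf^t=\sbf$. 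The $\sbf$-recursion \eqref{eq:siterMS} immediately forces $\zbfhat=\Abf\xbfhat$, the feasibility condition. For stationarity, observe that at a fixed point the minimizer in \eqref{eq:xiterMS} is attained at $\xbf=\xbfhat$, the center of the quadratic penalty; hence the gradient of $\tfrac12\|\xbf-\xbfhat\|^2_{\taubf_r}$ vanishes there and the first-order condition collapses to $\frac{\partial}{\partial\xbf}L(\xbfhat,\zbfhat,\sbf)=\zero$. Likewise, since $\zbfhat=\Abf\xbfhat$, the minimizer in \eqref{eq:ziterMS} sits at the center $\Abf\xbfhat$ of its penalty, so $\frac{\partial}{\partial\zbf}L(\xbfhat,\zbfhat,\sbf)=\zero$. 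Together these give \eqref{eq:LagFPMS}, and eliminating $\sbf$ recovers the stationarity $\nabla f_x(\xbfhat) + \Abf^T\nabla f_z(\Abf\xbfhat)=\zero$ of the constrained problem \eqref{eq:xzOpt}.

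The substantive part is the approximate-diagonal claim. Here I would first reduce the quadratic-term recursions at a fixed point to a closed self-consistent system. Using \eqref{eq:proxderiv} for lines \ref{line:tauzMS} and \ref{line:tauxMS} gives $\tau_{z_i}=\tau_{p_i}/(1+\tau_{p_i}d_{z_i})$ and $\tau_{x_j}=\tau_{r_j}/(1+\tau_{r_j}d_{x_j})$, with $\dbf_x,\dbf_z$ as in \eqref{eq:dxz}. Substituting $\tau_{z_i}$ into line \ref{line:taus} and simplifying yields the clean relation $1/\tau_{s_i}=1/d_{z_i}+\tau_{p_i}$; combining with $\taubf_p=\Sbf\taubf_x$ (line \ref{line:taup}), $1/\taubf_r=\Sbf^T\taubf_s$ (line \ref{line:taur}), and the $\tau_{x_j}$ relation produces the coupled fixed-point equations $1/\tau_{x_j}=d_{x_j}+\sum_i|A_{ij}|^2\tau_{s_i}$ and $1/\tau_{s_i}=1/d_{z_i}+\sum_j|A_{ij}|^2\tau_{x_j}$. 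I would then show these are exactly the defining equations of the approximate diagonal (Appendix~\ref{sec:approxDiag}) of $\Qbf_x$ and $\Qbf_z$ in \eqref{eq:Qxz}: the first is the self-consistent diagonal approximation of $(\diag(\dbf_x)+\Abf^T\diag(\dbf_z)\Abf)^{-1}$ in which the ``direct'' curvatures $d_{z_i}$ are replaced by the effective diagonal $\taubf_s$, while $\taubf_s$ is simultaneously the diagonal approximation of $(\diag(\dbf_z)^{-1}+\Abf\diag(\dbf_x)^{-1}\Abf^T)^{-1}=\Qbf_z$ obtained by replacing each $1/d_{x_j}$ with $\tau_{x_j}$.

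The main obstacle is matching these derived equations to the precise definition of ``approximate diagonal'' in the appendix: I must verify that the two GAMP recursions close on each other in exactly the nested form that definition prescribes — i.e., that the appendix's construction, applied to $\Qbf_x$ and $\Qbf_z$ jointly, yields precisely this coupled pair rather than merely something resembling a cavity/Schur-complement approximation. Once the definitional match is pinned down, the identification follows immediately from the fixed-point equations above, and the interpretation of $\taubf_x,\taubf_s$ as approximate diagonals of $\Qbf_x,\Qbf_z$ is complete.
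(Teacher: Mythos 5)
Your proposal is correct and matches the paper's own proof essentially step for step: the same completing-the-square reduction of \eqref{eq:xiterMS} and \eqref{eq:ziterMS} to the proximal updates of Algorithm~\ref{algo:gamp}, the same fixed-point argument in which \eqref{eq:siterMS} forces $\zbfhat=\Abf\xbfhat$ so that each quadratic penalty is centered at its own minimizer and \eqref{eq:LagFPMS} follows, and the same reduction of the variance recursions to the coupled equations $1/\taubf_x = \dbf_x + \Sbf^T\taubf_s$ and $1/\taubf_s = 1/\dbf_z + \Sbf\taubf_x$. The ``main obstacle'' you flag at the end is in fact vacuous: Definition~\ref{def:approxDiag} in Appendix~\ref{sec:approxDiag} \emph{defines} the approximate diagonals of $\Qbf_x$ and $\Qbf_z$ as the unique positive solutions of exactly this coupled pair \eqref{eq:QxzFP} with $\Sbf=|\Abf|^2$, so the identification you derived is immediate rather than requiring any further definitional matching.
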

\begin{IEEEproof}
See Appendix \ref{sec:gampMSPf}.
\end{IEEEproof}

The first part of the theorem, equations \eqref{eq:lagMS}, shows that max-sum GAMP can be interpreted as the ADMM Algorithm~\ref{algo:ADMM} with adaptive vector-valued step-sizes $\taubf_r^t$ and $\taubf_p^t$ and a particular choice of penalty $Q_x(\cdot)$.
\iftoggle{conference}{Although the convergence of a simpler variant of this algorithm, with non-adaptive step-sizes and convex $f_x(\cdot)$ and $f_z(\cdot)$, was studied in \cite{Esser:Diss:10,Chambolle:JMIV:11,He:JIS:12}, the results there do not directly apply to the more general GAMP algorithm.}{
To more precisely connect GAMP and existing algorithms, it helps to express GAMP's $\xbf$-update \eqref{eq:xiterMS} as the $\theta\!=\!0$ case of
\begin{align}
\argmin_{\xbf} f_x(\xbf) + \frac{1}{2}\big\|\xbf-\xbf^t+\taubf_r^t .\Abf^T\big( \theta(\sbf^{t-1}-\sbf^t)-\sbf^t\big)\big\|_{\taubf_r^t}^2 ,
\end{align}
and recognize that the ISTA-inspired inexact ADMM $\xbf$-update \eqref{eq:xminIAL} coincides with the $\theta\!=\!1$ case under step-sizes $\alpha=1/\taubf_p^t$ and $c=1/\taubf_r^t$.
The convergence of this algorithm for particular $\theta\in[0,1]$ was studied in \cite{Esser:Diss:10,Chambolle:JMIV:11,He:JIS:12} under convex functions $f_x(\cdot)$ and $f_z(\cdot)$ and non-adaptive step-sizes.
Unfortunately, these convergence results do not directly apply to the adaptive vector-valued step-sizes of GAMP.

The second part of the theorem, \textb{equation \eqref{eq:LagFPMS}}, shows that if the algorithm converges then its fixed points will be critical points of the constrained optimization \eqref{eq:xzOpt}.
\textb{This part of the theorem can be considered as a generalization of Proposition~7.1 in \cite{DonohoJM:11arXiv}, which considers quadratic $f_z$, and of Proposition~5.1 in \cite{Montanari:12-bookChap}, which considers quadratic $f_z$ and $f_x(\xbf)=\|\xbf\|_1$.} }

\iftoggle{conference}{The full paper \cite{RanSRFC:13-arxiv} also shows that the quadratic
term $\taubf_x$ can be interpreted as an approximate diagonal to the inverse Hessian.}{
\textb{The third part of Theorem~\ref{thm:gampMS} then shows that the quadratic term $\taubf_x$ can be interpreted as an ``approximate diagonal'' of the inverse Hessian under the large random matrix model described in Appendix~\ref{sec:approxDiag}.}

Finally, it is useful to compare the fixed-points of GAMP with those of standard BP\@.
A classic result of \cite{WeissFree:01} shows that
any fixed point for standard max-sum loopy BP is locally optimal in the sense that
one cannot improve the objective function by perturbing the solution on any set of components
whose variables belong to a subgraph that contains at most one cycle.  In particular, if the
overall graph is acyclic, any fixed-point of standard max-sum loopy BP
is globally optimal.  Also, for any graph, the objective
function cannot be reduced by changing any individual component.  The local optimality
for GAMP provided by Theorem~\ref{thm:gampMS}
is weaker \textb{than that for max-sum loopy BP} in that \textb{GAMP's} fixed-points only satisfy first-order conditions for
saddle points of the Lagrangian.  This implies that, even an individual component may only
be locally optimal.
}

\section{Fixed-Points of Sum-Product GAMP} \label{sec:sumProdFP}

\subsection{Bethe Free Energy} \label{sec:BFE}

A classic result in graphical models is that the fixed points of loopy
BP can be interpreted as critical points in the constrained
minimization of a energy function known as the Bethe Free energy (BFE)
\cite{YedidiaFW:03,Heskes:03}.
In this section, we will show that sum-product
GAMP has a similar energy function interpretation.

Specifically, consider a set of scalar densities
\beq \label{eq:bqxz}
    b_{x_j}(x_j), \quad b_{z_i}(z_i), \quad q_{z_i}(z_i),
\eeq
where the densities $q_{z_i}(z_i)$ are Gaussian.
Given any such set, define the product densities
\begin{subequations} \label{eq:bqxzProd}
\beqa
    & & b_x(\xbf) = \prod_{j=1}^n b_{x_j}(x_j), \quad
    b_z(\zbf) = \prod_{i=1}^m b_{z_i}(z_i)  \label{eq:bxzProd}\\
    & & q_z(\zbf) = \prod_{i=1}^m q_{z_i}(z_i), \label{eq:qzProd}
\eeqa
\end{subequations}
and the energy function
\beqa
    J_{\rm SP}(b_x,b_z,q_z) & := & D(b_x\|e^{-f_x}) + D(b_z\|e^{-f_z}) \nonumber \\
    && + D(b_z\|q_z) + H(b_z),  \hspace{3cm} \label{eq:JBFEq}
\eeqa
where $H(b_z)$ is the differential entropy.
With these definitions,
consider the constrained minimization
\beq \label{eq:JBFEoptq}
\begin{aligned}
    & \min_{b_x,b_z,q_z} & & J_{\rm SP}(b_x,b_z,q_z)  \\
    & \text{s.t.} & &   \Exp(\zbf|b_z) = \Exp(\zbf|q_z) = \Abf\Exp(\xbf|b_x) \\
    & & & \taubf_p = \Sbf \var(\xbf|b_x), \quad \Sbf = \Abf.\Abf  \\
    & & & q_z(\zbf) \sim {\mathcal N}\left(\zbf|\mubf_p,\Diag(\taubf_p)\right),
\end{aligned}
\eeq
Here and below, \textb{we use $\Exp(\xbf|b_x)$ to denote the expected value of $\xbf\sim b_x$, and similar for $\Exp(\zbf|b_z)$.
Also, we use $\var(\xbf|b_x)$ to denote the vector whose $j$th component is the variance of $x_j\sim b_{x_j}$, and similar for $\var(\zbf|b_z)$.
We stress that $\var(\xbf|b_x)$ is a vector, not a covariance matrix.}
Note also that the last constraint in
\eqref{eq:JBFEoptq} simply states that $q_z$ must be Gaussian with independent
components.

Note that since
\[
    D(b_z\|q_z) + H(b_z) = -\Exp\left[ \log q_z(\zbf) \mid b_z \right],
\]
the objective function \eqref{eq:JBFEq} is separately convex in $(b_x,b_z)$ and
$q_z$.  However, it is not, in general, jointly convex in all three densities.  Also, 
the final two constraints in the optimization \eqref{eq:JBFEoptq}, on the variances
and Gaussianity of $q_z$, are also not convex.  

Our main result, Theorem~\ref{thm:gampSP} below, shows that
sum-product GAMP can be interpreted as a method to approximately
minimize this non-convex energy function.
This result was first stated in the conference version of this
paper~\cite{RanSRFC:13-ISIT}.  Since the publication of that paper,
it was stated in \cite{Krzakala:14-ISITbethe} that, in the case
of additive white Gaussian noise (AWGN) output channels, the constrained
optimization
\eqref{eq:JBFEoptq} can be interpreted as an approximation of
the Bethe Free energy optimization that is valid when (a) the matrix $\Abf$ has
i.i.d.\ zero mean entries and $m,n \arr \infty$, and (b) the standard
marginalization constraints in the BFE optimization are replaced by
matching constraints on the first and second moments.
A subsequent work
\cite{kabashima2014phase} derived a similar approximate BFE optimization
for arbitrary output channels and matrix uncertainties.
We will not discuss the BFE interpretation in this work; the reader
is referred to \cite{Krzakala:14-ISITbethe,kabashima2014phase}.
However, in recognition of the relation to the Bethe free energy minimization,
we will call the energy function \eqref{eq:JBFEq}
the large system limit Bethe Free energy (LSL-BFE) and call the constrained
minimization
\eqref{eq:JBFEoptq} the LSL-BFE optimization.

\subsection{GAMP Optimization}

To relate the LSL-BFE optimization
\eqref{eq:JBFEoptq} to sum-product GAMP,  we first rewrite the optimization
to remove the minima over $q_z$.
Given a density $b_z(\zbf)$, define the function
\beqa
   \lefteqn{ H_{\rm gauss}(b_z,\taubf_p) :=  D(b_z\|q_z) + H(b_z), } \nonumber \\
    &  q_z(\zbf) = {\mathcal N}(\zbf|\mubf_p,\Diag(\taubf_p)), \quad
    \mubf_p = \Exp(\zbf|b_z).  \label{eq:Hgauss1}
\eeqa
This function is simply the last two terms of $J_{\rm SP}(b_x,b_z,q_z)$
in \eqref{eq:JBFEq}
with $q_z(\zbf)$ being the Gaussian density with mean $\mubf_p = \Exp(\zbf|b_z)$
and variance $\var(\zbf|q_z) = \taubf_p$.
It can be calculated that
\beq \label{eq:Hgauss}
     H_{\rm gauss}(b_z,\taubf_p)
     = \frac{1}{2}\sum_{i=1}^m \left[ \frac{\var(z_i|b_{z_i})}{\tau_{p_i}} + \log(2\pi\tau_{p_i}) \right].
\eeq
Note that from \eqref{eq:Hgauss1}, $H_{\rm gauss}(b_z,\taubf_p) \geq H(b_z)$
for all $\taubf_p$ with equality when $b_z$ is itself Gaussian with variance
$\var(\zbf|b_z) = \taubf_p$.  Hence, we will call $H_{\rm gauss}(b_z,\taubf_p)$
the \emph{Gaussian entropy upper bound} function.
Using this upper bound function, we can replace
the minimization over Gaussian $q_z$ in
\eqref{eq:JBFEoptq} with an optimization over the vector of variances
$\taubf_p$.  This results in the equivalent optimization
\beq \label{eq:JBFEopt}
\begin{aligned}
    & \min_{b_x,b_z,\taubf_p} & & J_{\rm SP}(b_x,b_z,\taubf_p)  \\
    & \text{s.t.} & &   \Exp(\zbf|b_z) = \Abf\Exp(\xbf|b_x) \\
    & & & \taubf_p = \Sbf \var(\xbf|b_x), \quad \Sbf = \Abf.\Abf
\end{aligned}
\eeq
where the objective function is
\beqa
    J_{\rm SP}(b_x,b_z,\taubf_p) &:=&
        D(b_x\|e^{-f_x})  + D(b_z\|e^{-f_z})
	\nonumber\\&&\mbox{}
      + H_{\rm gauss}(b_z,\taubf_p). \label{eq:JSP}
\eeqa
With some abuse of notation, we have used $J_{\rm SP}(\cdot)$
to denote both the LSL-BFE function in terms of $q_z$ as in~\eqref{eq:JBFEq}
and the function in terms of the variance vector $\taubf_p$ as given in
\eqref{eq:JSP}.

Corresponding to \eqref{eq:JBFEopt}, define the Lagrangian
\beqa
    L_{\rm SP}(b_x,b_z,\taubf_p,\sbf)
    & = & J_{\rm SP}(b_x,b_z,\taubf_p) \nonumber \\
    & & + \sbf^T(\Exp(\zbf|b_z) - \Abf\Exp(\xbf|b_x)), \label{eq:LSP}
\eeqa
where $\sbf$ represents a vector of dual parameters.
Note that this Lagrangian does \emph{not} include the constraint
$\taubf_p = \Sbf\var(\xbf|b_x)$; we will handle that separately.
We can now state the main result.

\begin{theorem} \label{thm:gampSP}
Consider the outputs of the sum-product GAMP version of Algorithm~\ref{algo:gamp}, and define the densities
\beq \label{eq:bxzSP}
    b_x^{\tp1}(\xbf) = p(\xbf|\rbf^t,\taubf_r^t), \quad
    b_z^t(\zbf) = p(\zbf|\pbf^t,\taubf_p^t),
\eeq
where $p(\xbf|\rbf,\taubf_r)$ and $p(\zbf|\pbf,\taubf_p)$ are given in \eqref{eq:pxzrp}.
Then, the GAMP algorithm input node update satisfies
\beqa
   b_x^{\tp1} &=& \argmin_{b_x}
   \Big[ L_{\rm SP}(b_x,b_z^t,\taubf_p^t,\sbf^t)
    + \frac{1}{2}(\taubf_s^t)^T\Sbf\var(\xbf|b_x)
    \nonumber \\ & & \mbox{}
    + \frac{1}{2}\left\|\Exp(\xbf|b_x)-\Exp(\xbf|b_x^t)
        \right\|^2_{\taubf_r^t} \Bigr].
    \label{eq:lagInSP}
\eeqa
where $L_{\rm SP}(\xbf,\zbf,\sbf)$ is the Lagrangian in \eqref{eq:LSP}.
Similarly, the steps in the output node update for the GAMP algorithm
are equivalent to:
\begin{subequations}  \label{eq:lagOutSP}
\beqa
    \taubf_p^t &=& \Sbf\var(\xbf|b_x^t), \label{eq:taupSP} \\
   b_z^t &=& \argmin_{b_z}
   \Big[ L_{\rm SP}(b_x^t,b_z,\taubf_p^t,\sbf^{\tm1}) \nonumber \\
   & & \mbox{}
	+ \frac{1}{2} \left\|\Exp(\zbf|b_z)-\Abf\Exp(\xbf|b_x^t)\right\|^2_{\taubf_p^{t}}
        \Bigr], \label{eq:bzSP} \\
    \sbf^{t} &=& \sbf^{\tm1} + \frac{1}{\taubf_p^{t}}
    \left[ \Exp(\zbf|b_z^{t})-\Abf\Exp(\xbf|b_x^{t}) \right],
    \label{eq:siterSP}\\
    \taubf_s^{t} &=& 2 \gradient_{\taubf_p}
    L_{\rm SP}(b_x^t,b_z^t,\taubf_p^{t},\sbf^{t}).
    \label{eq:tausiterSP}
\eeqa
\end{subequations}
Moreover, any fixed point of the sum-product GAMP algorithm
is a critical point of the constrained optimization \eqref{eq:JBFEopt}.
\end{theorem}
\begin{proof} See Appendix \ref{sec:gampSPPf}.
\end{proof}

Theorem~\ref{thm:gampSP} exposes connections between sum-product GAMP and both the
ISTA and ADMM methods described earlier.
The minimizations over $b_x$ and $b_z$ and the update of the dual parameters
$\sbf^t$ in \eqref{eq:lagInSP}, \eqref{eq:bzSP} and \eqref{eq:siterSP}
follow the format of the ADMM minimizations in Algorithm~\ref{algo:ADMM}
for certain choices of the auxiliary functions.  On the other hand, the
role of $\taubf_s^t$ in \eqref{eq:lagInSP} and \eqref{eq:tausiterSP}
follows the gradient-based method of the
generalized ISTA method in Algorithm \ref{algo:ISTA} for the constraint
$\taubf_s=\Sbf\var(\xbf|b_x)$.  So, the sum-product GAMP
algorithm can be seen as a hybrid of the ISTA and ADMM methods for the
optimization problem \eqref{eq:JBFEopt}.

Unfortunately, this hybrid ISTA-ADMM method is non-standard and we are not aware of existing
convergence theory. However, Theorem~\ref{thm:gampSP} at least shows that,
if the sum-product GAMP algorithm converges, then
its fixed points correspond to critical points of the optimization problem \eqref{eq:JBFEopt}.

\section*{Conclusions}
Although AMP methods admit precise analyses in the context of large
i.i.d.\ transform matrices $\Abf$, their behavior for general matrices is less well-understood.
This limitation is unfortunate since many transforms arising in practical problems such as imaging and regression
are not well-modeled as realizations of
large i.i.d.\ matrices.
To help overcome these limitations, this paper draws connections between AMP and certain variants of standard optimization methods that employ adaptive vector-valued step-sizes.
These connections enable a precise characterization of the fixed-points
of both max-sum and sum-product GAMP for the case of arbitrary transform matrices $\Abf$.

However, much work remains to be done.  Most importantly, while our results
relate GAMP to standard optimization methods, these do not guarantee the
algorithm's convergence.
As mentioned in the Introduction,
for general $\Abf$, it is well-known that GAMP methods may diverge \cite{RanSchFle:14-ISIT,Caltagirone:14-ISIT}.
Several recent modifications have been proposed
to improve the stability of GAMP, including
damping \cite{RanSchFle:14-ISIT,Vila:ICASSP:15}.
One potential line of future work is to consider alternates to GAMP
that are based on direct minimization of the energy function.
Some preliminary works in this regard have been presented
in \cite{manoel2014swamp} which proposes a coordinate descent method
and \cite{Rangan:arxiv:15} which uses an ADMM-based method.

GAMP-based methods have also been extended in a wide variety of ways,
such as combining EM with GAMP~\cite{KrzMSSZ:11-arxiv,Vila:TSP:13,Vila:TSP:14,KamRanFU:12-nips,KamRanFU:12-IT},
turbo and hybrid GAMP methods
\cite{SomS:12,RanganFGS:12-ISIT}, applications in
dictionary learning and matrix factorization
\cite{RanganF:12-ISIT,parker2013bilinear,parker2013bilinear2,krzakala2013phase,lesieur2015mmse}, and applications in 
blind deconvolution and self-calibration
\cite{parker2015bilinear}.
Another line of work would be to understand if one can find
free energy and optimization interpretations of these algorithms.
For dictionary learning and matrix factorization
some initial work has appeared in
\cite{kabashima2014phase,krzakala2013phase}.

\section*{Acknowledgements}
The authors would like to thank Ulugbek Kamilov and Vivek K Goyal for
their valuable comments.

\appendices

\section{Approximate Diagonals} \label{sec:approxDiag}

Given a matrix $\Abf \in \R^{m \x n}$ and positive vectors $\dbf_x \in \R^n$ and $\dbf_z$,
consider the positive matrices \eqref{eq:Qxz}.
We analyze the asymptotic behavior of these
matrices under the following assumptions:

\begin{assumption}  \label{as:Qxz}
Consider a sequence of matrices $\Qbf_x$ and
$\Qbf_z$ of the form \eqref{eq:Qxz}, indexed by the dimension $n$
satisfying:
\begin{itemize}
\item[(a)]  The dimension $m$ is a deterministic
function of $n$ with $\lim_{n \arr \infty} m/n = \beta$
for some $\beta > 0$,
\item[(b)]  The positive vectors $\dbf_x$ and $\dbf_z$
are deterministic vectors with
\[
    \limsup_{n \arr \infty} \|\dbf_x\|_\infty < \infty, \quad
    \limsup_{n \arr \infty} \|\dbf_z\|_\infty < \infty.
\]
\item[(c)]  The components of $\Abf$ are
 independent, zero-mean with $\var(A_{ij}) = S_{ij}$
 for some deterministic matrix $\Sbf$  such that
\[
    \limsup_n \max_{i,j}nS_{ij} < \infty.
\]
\end{itemize}
\end{assumption}

\begin{theorem}[\cite{HacLouNaj:07}]  Consider a sequence of matrices $\Qbf_x$ and $\Qbf_z$
in Assumption \ref{as:Qxz}.
Then, for each $n$, there exists positive vectors
$\xibf_x$ and $\xibf_z$ satisfying the
nonlinear equations
\beq \label{eq:QxzFP}
    \mathbf{1}./\xibf_z = \mathbf{1}./\dbf_z+\Sbf\xibf_x, \quad
    \mathbf{1}./\xibf_x = \mathbf{1}./\dbf_x+\Sbf^T\xibf_z,
\eeq
where the vector inverses are componentwise.
Moreover, the vectors
$\xibf_z$ and $\xibf_x$ are asymptotic diagonals of $\Qbf_x$
and $\Qbf_z$ in the following sense:  For any deterministic
sequence of positive vectors $\ubf_x \in \R^n$ and $\ubf_z \in \R^m$,
such that
\[
    \limsup_{n \arr \infty} \|\ubf_x\|_\infty < \infty, \quad
    \limsup_{n \arr \infty} \|\ubf_z\|_\infty < \infty,
\]
the following limits hold almost surely
\beqan
    \lim_{n \arr \infty} \frac{1}{n}\sum_{j=1}^n\left[
        u_{xj}((Q_x)_{jj}-\xi_{xj})\right] &=& 0 \\
    \lim_{n \arr \infty} \frac{1}{m}\sum_{i=1}^m\left[
        u_{zi}((Q_z)_{ii}-\xi_{zi})\right] &=& 0.
\eeqan
\end{theorem}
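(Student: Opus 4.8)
The plan is to treat this as a \emph{deterministic-equivalent} result for the diagonal of the resolvent of a random matrix with a variance profile, and to prove it in three stages: (i) existence and uniqueness of a solution of the fixed-point system \eqref{eq:QxzFP}; (ii) a self-consistent (cavity / leave-one-out) identification of $\Exp[(Q_x)_{jj}]$ and $\Exp[(Q_z)_{ii}]$ with $\xi_{xj}$ and $\xi_{zi}$; and (iii) a concentration argument upgrading convergence of these expectations to the claimed almost-sure convergence of the weighted averages. Throughout write $D_x=\diag(\dbf_x)$ and $D_z=\diag(\dbf_z)$, so that the matrices in \eqref{eq:Qxz} read $\Qbf_x=(D_x+\Abf^T D_z\Abf)^{-1}$ and $\Qbf_z=(D_z^{-1}+\Abf D_x^{-1}\Abf^T)^{-1}$. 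Since the $\xi$'s are deterministic, it suffices to prove, for bounded probe vectors $\ubf_x,\ubf_z$, that the trace functionals $\frac1n\mathrm{tr}(\diag(\ubf_x)\Qbf_x)$ and $\frac1m\mathrm{tr}(\diag(\ubf_z)\Qbf_z)$ converge to $\frac1n\sum_j u_{xj}\xi_{xj}$ and $\frac1m\sum_i u_{zi}\xi_{zi}$.

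For stage (i) I would solve \eqref{eq:QxzFP} by substitution: eliminating $\xibf_z$ yields a map $\xibf_x\mapsto\xibf_x'$ with $\xi_{xj}'=1/(d_{xj}+\sum_i S_{ij}/(1/d_{zi}+\sum_k S_{ik}\xi_{xk}))$, which is positive, componentwise monotone increasing in $\xibf_x$, and maps the box $0<\xi_{xj}\le 1/d_{xj}$ into itself. Monotone (Knaster--Tarski) iteration then yields a fixed point, while a contraction in the Thompson metric (or the standard-interference-function property) gives uniqueness. This is the easy part and uses only Assumption~\ref{as:Qxz}(b) to keep the box bounded.

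Stage (ii) is the core. Fixing a column index $j$ and applying the Schur-complement / block-inversion formula to $\Qbf_x$, one obtains the exact identity $(Q_x)_{jj}=\big(d_{xj}+\abf_j^T\Qbf_z^{(\bar\jmath)}\abf_j\big)^{-1}$, where $\abf_j$ is the $j$-th column of $\Abf$ and $\Qbf_z^{(\bar\jmath)}=(D_z^{-1}+\Abf_{\bar\jmath}D_{x,\bar\jmath}^{-1}\Abf_{\bar\jmath}^T)^{-1}$ is the $\Qbf_z$ built from $\Abf$ with column $j$ deleted; the Woodbury identity relating the two matrices in \eqref{eq:Qxz} is exactly what makes this compression exact. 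Since $\abf_j$ is independent of $\Qbf_z^{(\bar\jmath)}$, the quadratic-form concentration lemma (Hanson--Wright / Bai--Silverstein) gives $\abf_j^T\Qbf_z^{(\bar\jmath)}\abf_j\approx\sum_i S_{ij}(Q_z^{(\bar\jmath)})_{ii}$, and a rank-one resolvent-identity bound gives $(Q_z^{(\bar\jmath)})_{ii}\approx(Q_z)_{ii}$. Running the symmetric computation on $\Qbf_z$ gives $(Q_z)_{ii}\approx(1/d_{zi}+\sum_j S_{ij}(Q_x)_{jj})^{-1}$, so the pair of approximate diagonals satisfies, up to vanishing error, exactly the self-consistent system \eqref{eq:QxzFP}; invoking the uniqueness from stage (i) then forces $(Q_x)_{jj}\to\xi_{xj}$ and $(Q_z)_{ii}\to\xi_{zi}$ on average.

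Finally, for stage (iii), the functional $\mathrm{tr}(\diag(\ubf_x)\Qbf_x)$ is a Lipschitz function of the independent entries of $\Abf$ (the resolvent is $1$-Lipschitz in operator norm and $\diag(\ubf_x)$ is bounded), so a martingale-difference / McDiarmid estimate gives exponential concentration about its mean, and Borel--Cantelli upgrades this to almost-sure convergence; combined with stage (ii) this yields the two stated limits. I expect stage (ii) to be the main obstacle: one must control the quadratic-form and rank-one fluctuations \emph{uniformly} across the $n$ coupled indices and propagate them through the two-sided system without the errors compounding, and---because Assumption~\ref{as:Qxz} only posits independence with a variance profile rather than Gaussianity---the quadratic-form step cannot use Gaussian integration by parts and instead needs a moment/universality (Lindeberg-swapping) argument, which is precisely the technical content of \cite{HacLouNaj:07}.
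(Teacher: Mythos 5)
The paper does not actually prove this theorem: its entire ``proof'' is the one-line remark that the result is a special case of the results in \cite{HacLouNaj:07}, and the theorem header itself carries that citation. Your sketch is therefore not competing with an in-paper argument but with the proof in the cited reference---and judged against that, it is essentially faithful. Your three stages (existence/uniqueness of the canonical fixed-point system, cavity/leave-one-out identification of the diagonal resolvent entries, concentration plus Borel--Cantelli) are exactly the deterministic-equivalent machinery of Hachem--Loubaton--Najim, and your key structural observation is correct: by the Schur complement, $(\Qbf_x)_{jj}=\bigl(d_{x_j}+\abf_j^T\Qbf_z^{(\bar\jmath)}\abf_j\bigr)^{-1}$ with $\Qbf_z^{(\bar\jmath)}$ the column-deleted version of $\Qbf_z$, and it is the Woodbury identity that makes this compression exact and couples the two equations in \eqref{eq:QxzFP}. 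Two differences in detail from the reference: there, existence and uniqueness are handled analytically in a class of Stieltjes transforms carrying a spectral parameter, whereas here the positive loading $\diag(\dbf_x)$ puts the problem at a fixed ``$z=0$,'' which makes your more elementary monotone-map/interference-function argument for stage (i) legitimate (your map is indeed positive, componentwise increasing, and self-maps the box $(0,1/d_{x_j}]$, with a positive lower bound $1/(d_{x_j}+\sum_i S_{ij}d_{z_i})$ guaranteeing a strictly positive fixed point); and in the non-Gaussian case the quantitative engine of \cite{HacLouNaj:07} is truncation plus the Bai--Silverstein quadratic-form and martingale-difference lemmas, not Lindeberg swapping, so your closing remark slightly misattributes the technical content of the reference, though the plan itself is unaffected.

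One concrete flag: your concentration tools assume more than Assumption~\ref{as:Qxz} supplies. Hanson--Wright requires sub-Gaussian entries and McDiarmid requires bounded differences, but Assumption~\ref{as:Qxz}(c) posits only independence, zero mean, and $\limsup_n\max_{i,j}nS_{ij}<\infty$---no higher-moment control at all. As written, the quadratic-form step in stage (ii) and the almost-sure concentration in stage (iii) both fail without finite fourth (or higher) moments or a truncation argument. This is not a defect unique to your proposal---the paper's assumption is itself looser than the hypotheses under which \cite{HacLouNaj:07} prove their theorems (independent factors with finite higher moments modulating a variance profile)---but a complete write-up along your lines would have to add a moment hypothesis or a truncation step, and should replace Hanson--Wright by the fourth-moment quadratic-form lemma to match the stated generality as closely as possible.
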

\begin{proof}  This result is a special case of the results in
\cite{HacLouNaj:07}.
\end{proof}

The result says that, for certain large random matrices $\Abf$,
$\xibf_x$ and $\xibf_z$
are approximate diagonals of the matrices $\Qbf_x$ and $\Qbf_z$,
respectively.  This motivates the following definition
for deterministic $\Abf$.

\begin{definition} \label{def:approxDiag}
Consider matrices $\Qbf_x$ and $\Qbf_z$ of the form \eqref{eq:Qxz}
for some \emph{deterministic} (i.e.\ non-random)
$\Abf$, $\dbf_x$ and $\dbf_z$.  Let $\Sbf = \Abf.\Abf$ be the
componentwise square of $\Abf$.  Then, the unique
positive solutions $\xibf_z$ and $\xibf_x$ to \eqref{eq:QxzFP}
will be called the \emph{approximate diagonals} of $\Qbf_z$ and
$\Qbf_x$, respectively.
\end{definition}

\section{Proof of Theorem \ref{thm:gampMS}} \label{sec:gampMSPf}

To prove \eqref{eq:ziterMS}, observe that
\beqa
    \lefteqn{ \argmin_{\zbf}
    \Bigl[ L(\xbf^t,\zbf,\sbf^{\tm1})
        +\frac{1}{2} \|\zbf-\Abf\xbf^t\|^2_{\taubf_p^t}
        \Bigr] } \nonumber \\
     &\stackrel{(a)}{=}& \argmin_{\zbf}
        \Bigl[ f_z(\zbf) + (\sbf^{\tm1})^T\zbf
        +\frac{1}{2}\|\zbf-\Abf\xbf^t\|^2_{\taubf_p^t}
            \Bigr]       \nonumber \\
        &\stackrel{(b)}{=}& \argmin_{\zbf}
	\Bigl[ f_z(\zbf) + \frac{1}{2}\|\zbf-\pbf^t\|^2_{\taubf_p^t} \Bigr]
	\stackrel{(c)}{=} \zbf^t,
        \nonumber
\eeqa
where (a) follows from substituting \eqref{eq:optsep}
and \eqref{eq:Lag} into \eqref{eq:ziterMS} and eliminating the terms
that do not depend on $\zbf$;
(b) follows from the definition of $\pbf^t$
in line \ref{line:phat};
and (c) follows from the definition of $\zbf^t$ in line \ref{line:zhatMS}.
This proves \eqref{eq:ziterMS}.  The update \eqref{eq:xiterMS}
can be proven similarly.
To prove \eqref{eq:siterMS}, observe that
\[
    \sbf^t \stackrel{(a)}{=} (\zbf^t-\pbf^t)./\taubf_p^t
    \stackrel{(b)}{=} \sbf^{\tm1} + (\zbf^t-\Abf\xbf^t)./\taubf_p^t
\]
where (a) follows from the update of $\sbf^t$ in
line \ref{line:shat} in Algorithm~\ref{algo:gamp}
(recall that the division is componentwise); and
(b) follows from the update for $\pbf^t$ in line \ref{line:phat}.
We have thus proven the equivalence of the max-sum GAMP
algorithm with the Lagrangian updates \eqref{eq:lagMS}.

Now consider any fixed point $(\zbfhat,\xbfhat,\sbf)$ of the max-sum GAMP algorithm.
A fixed point of \eqref{eq:siterMS}, requires that
\beq \label{eq:zxfixMS}
    \zbfhat = \Abf\xbfhat
\eeq
so the fixed point satisfies the constraint of the optimization
\eqref{eq:xzOpt}.
Now, using \eqref{eq:zxfixMS} and
the fact that $\zbfhat$ is the minima of \eqref{eq:ziterMS}, we have that
\[
    \gradient_{\zbf} 
    L(\xbfhat,\zbfhat,\sbf) = \mathbf{0}.
\]
Similarly, since $\xbf$ is the minima
of \eqref{eq:xiterMS}, we have that
\[
    \gradient_{\zbf}
    L(\xbfhat,\zbfhat,\sbf) = \mathbf{0}.
\]
Thus, the fixed point $(\xbfhat,\zbfhat,\sbf)$ is a critical point of the Lagrangian
\eqref{eq:Lag}.

Finally, consider the quadratic terms $(\taubf_x,\taubf_r,\taubf_s)$ at the fixed point.
From the updates of $\taubf_x$ and $\taubf_r$ in Algorithm~\ref{algo:gamp} [see also \eqref{eq:proxderiv}]
and the definition of $\dbf_x$ in \eqref{eq:dxz}, we obtain
\beq
    \mathbf{1}./\taubf_x = \dbf_x + \mathbf{1}./\taubf_r
    = \dbf_x + \Sbf^T\taubf_s.
\eeq
Similarly, the updates of $\taubf_s$ and $\taubf_p$ show that
\beq
    \mathbf{1}./\taubf_s = \mathbf{1}./\dbf_z + \taubf_p =
    \mathbf{1}./\dbf_z + \Sbf\taubf_x.
\eeq

Then, according to Definition~\ref{def:approxDiag}, $\taubf_x$ and $\taubf_s$ are the approximate diagonals of $\Qbf_x$ and $\Qbf_z$ in \eqref{eq:Qxz}, respectively.

\section{Proof of Theorem \ref{thm:gampSP}} \label{sec:gampSPPf}

We prove this theorem in two parts.  First we show that the sum-product GAMP
updates are equivalent to \eqref{eq:lagInSP} and \eqref{eq:lagOutSP}.
Then we show that any fixed points of these updates are critical points
of the constrained optimization \eqref{eq:JBFEopt}.

\subsection{Equivalence of the Updates}

We begin by proving \eqref{eq:lagInSP}.
Define $b_x^{\tp1}$ as the solution to the minimization \eqref{eq:lagInSP}.
So, we must show that this solution is given by the equation for $b_x^{\tp1}(\xbf)$
in \eqref{eq:bxzSP}.  We use induction: Suppose that $b_x^{\tp1}$ in
\eqref{eq:bxzSP} is the solution to \eqref{eq:lagInSP} for some $t$.
We will then show that it is the solution for $t+1$.

First, combining the induction hypothesis that $b_x^{\tp1}$ is given in \eqref{eq:bxzSP}
with lines~\ref{line:xhatSP} and \ref{line:tauxSP} of Algorithm~\ref{algo:gamp}, we have
\beq \label{eq:xbx}
    \xbf^t = \Exp(\xbf|b_x^t), \quad
    \taubf_x^t = \var(\xbf|b_x^t).
\eeq
That is, $\xbf^t$ and $\taubf_x^t$ are the mean and variance vectors of the
density $b_x^t$.
We next simplify the right hand side of \eqref{eq:lagInSP} to remove terms
that do not depend on $b_x$:
\beqa
   \lefteqn{ L_{\rm SP}(b_x,b_z^t,\taubf_p^t,\sbf^t)
    + \frac{1}{2}(\taubf_s^t)^T\Sbf\var(\xbf|b_x) }
    \nonumber \\ & &
    + \frac{1}{2}\left\|\Exp(\xbf|b_x)-\Exp(\xbf|b_x^t)
        \right\|^2_{\taubf_r^t} \nonumber \\
   &\stackrel{(a)}{=}& D(b_x\|e^{-f_x}) -(\sbf^t)^T\Abf\Exp(\xbf|b_x) +  \frac{1}{2}(\taubf_s^t)^T\Sbf\var(\xbf|b_x) \nonumber \\
    & &+ \frac{1}{2}\left\|\Exp(\xbf|b_x)-\Exp(\xbf|b_x^t)
        \right\|^2_{\taubf_r^t} + \mbox{const} \nonumber \\
   &\stackrel{(b)}{=}& D(b_x\|e^{-f_x}) -(\sbf^t)^T\Abf\Exp(\xbf|b_x) +  \left(\frac{1}{2\taubf_r^t}\right)^T\var(\xbf|b_x) \nonumber \\
    & &+ \frac{1}{2}\left\|\Exp(\xbf|b_x)-\xbf^t
        \right\|^2_{\taubf_r^t} + \mbox{const}  \\
   &\stackrel{(c)}{=}& D(b_x\|e^{-f_x}) + \left(\frac{1}{2\taubf_r^t}\right)^T\var(\xbf|b_x) \nonumber \\
    & &+ \frac{1}{2}\left\|\Exp(\xbf|b_x)-\rbf^t\right\|^2_{\taubf_r^t} + \mbox{const} \nonumber \\
   &\stackrel{(d)}{=}& D(b_x\|e^{-f_x})
    + \frac{1}{2}\Exp\left(\|\xbf-\rbf^t\|^2_{\taubf_r^t} \,\middle|\, b_x \right)
   + \mbox{const}, \label{eq:LSPbx}
\eeqa
where in all the steps
``const" denotes any terms that do not depend on $b_x$,
and (a) follows from
the definition of the Lagrangian \eqref{eq:LSP} and the objective
function \eqref{eq:JSP};
(b) follows from \eqref{eq:xbx} and the fact that
$\taubf_r^t = \mathbf{1}./(\Sbf^T\taubf_s^t)$ in
line~\ref{line:taur} of Algorithm~\ref{algo:gamp};
(c) follows from the definition of $\rbf^t$ in line~\ref{line:rhat}; and
finally (d) follows from the simplification:
\beqa
   \lefteqn{ \left(\mathbf{1}./\taubf_r^t\right)^T\var(\xbf|b_x)
    + \left\|\Exp(\xbf|b_x)-\rbf^t
        \right\|^2_{\taubf_r^t} } \nonumber \\
    &=& \sum_{j=1}^n \left[ \frac{1}{\tau_{r_j^t}} \Big(
    \var(x_j|b_{x_j}) +
    (\Exp(x_j|b_{x_j})-r_j^t)^2 \Big)  \right] \nonumber \\
    &=& \sum_{j=1}^n \left[
    \frac{1}{\tau_{r_j}^t}\Big( \Exp(x_j^2|b_{x_j}) - 2r_j^t\Exp(x_j|b_{x_j})
    \Big) \right] + \mbox{const} \nonumber  \\
    &=& \Exp\left(\|\xbf-\rbf^t\|^2_{\taubf_r^t} \,\middle|\, b_x\right)
    + \mbox{const}. \nonumber
\eeqa
Substituting \eqref{eq:LSPbx} into \eqref{eq:lagInSP},
and using the definition of $p(\xbf|\rbf,\taubf_r)$ in \eqref{eq:pxzrp},
\beqa
    \lefteqn{ b_x^{\tp1} = \argmin_{b_x}
    D(b_x\|e^{-f_x})     + \frac{1}{2}\Exp\left(\|\xbf-\rbf^t\|^2_{\taubf_r^t} \,\middle|\, b_x \right) } \nonumber \\
    &=& \argmin_{b_x} -H(b_x) + \Exp\left(f_x(\xbf)
        + \frac{1}{2}\|\xbf-\rbf^t\|^2_{\taubf_r^t} \,\middle|\, b_x \right) \nonumber \\
    &=& \argmin_{b_x} -H(b_x) - \Exp\left(\log p(\xbf|\rbf^t,\taubf_r^t) \mid b_x \right) \nonumber \\
    &=& \argmin_{b_x} D\left(b_x \,\|\, p(\cdot|\rbf^t,\taubf_r^t) \right),
\eeqa
which proves that $b_x^{\tp1}$ satisfies \eqref{eq:bxzSP}.

Similarly, one can show that the solution $b_z^t$ in \eqref{eq:bzSP} is given
by \eqref{eq:bxzSP}.  In addition, $\zbf^t$ and $\taubf_z^t$ in
lines~\ref{line:zhatSP} and \ref{line:tauzSP}  of Algorithm~\ref{algo:gamp} are the mean and variances of the
estimated densities,
\beq \label{eq:zbz}
    \zbf^t = \Exp(\zbf|b_z^t), \quad \taubf_z^t = \var(\zbf|b_z^t).
\eeq
Equation \eqref{eq:taupSP} follows directly from line~\ref{line:taup} and
\eqref{eq:xbx}.  Also, combining lines~\ref{line:phat} and \ref{line:shat},
we obtain \eqref{eq:siterSP}.

Finally to prove \eqref{eq:tausiterSP}, we take the derivatives
\beqa
    \lefteqn{ 
    \gradient_{\taubf_p}
    L_{\rm SP}(b_x^t,b_z^t,\taubf_p^t,\sbf^t) } \nonumber \\
    &\stackrel{(a)}{=} & 
    \gradient_{\taubf_p}
    H_{\rm gauss}(b_z^t,\taubf_p^t) 
    \stackrel{(b)}{=} \frac{1}{2}\Bigl[ \mathbf{1}./\taubf_p^t -
        \taubf_z^t./(\taubf_p^t.\taubf_p^t) \Bigr] \nonumber \\
    &\stackrel{(c)}{=} & \frac{1}{2}\taubf_s^t \nonumber ,
\eeqa
where (a) follows from removing the terms in \eqref{eq:LSP} that
do not depend on  $\taubf_p$;
(b) can be verified by simply taking the derivative of $H_{\rm gauss}$
in \eqref{eq:Hgauss}
with respect to each component $\tau_{p_i}$; and
(c) follows from the definition of $\taubf_s^t$ in line \ref{line:taus}
of Algorithm \ref{algo:gamp}.  This proves \eqref{eq:tausiterSP},
and we have established that the sum-product GAMP updates are equivalent
to \eqref{eq:lagInSP} and \eqref{eq:lagOutSP}.

\subsection{Characterization of the Fixed Points}

First by substituting the constraint $\taubf_p=\Sbf\var(\xbf|b_x)$, we
can rewrite the optimization \eqref{eq:JBFEopt} as
\beq \label{eq:JBFEopt2}
\begin{aligned}
    & \min_{b_x,b_z} & & J_{\rm SP}(b_x,b_z,\Sbf\var(\xbf|b_x))  \\
    & \text{s.t.} & &   \Exp(\zbf|b_z) = \Abf\Exp(\xbf|b_x).
\end{aligned}
\eeq
Corresponding to this optimization, define the Lagrangian
\beqa
    \widetilde{L}_{\rm SP}(b_x,b_z,\sbf) & = & J_{\rm SP}(b_x,b_z,\Sbf\var(\xbf|b_x)) \nonumber \\
    & & + \sbf^T\left(\Exp(\zbf|b_x) - \Abf\Exp(\xbf|b_x) \right), \label{eq:LtSP}
\eeqa
where $\sbf$ are the dual parameters.
Now, let $(\bhat_x,\bhat_z,\sbf)$ be any fixed points of the updates \eqref{eq:lagInSP}
and \eqref{eq:lagOutSP}.  To show that $(\bhat_x,\bhat_z)$
are critical points of the optimization
\eqref{eq:JBFEopt2}, we need to show that they satisfy the constraint
 $\Exp(\zbf|\bhat_z) = \Abf\Exp(\xbf|\bhat_x)$ and that $(\bhat_x,\bhat_z)$
 are stationary points of the Lagrangian $\widetilde{L}_{\rm SP}(b_x,b_z,\sbf)$.

From \eqref{eq:siterSP}, we have that, at any fixed point $(\bhat_x,\bhat_z)$
\beq \label{eq:linConb}
    \Exp(\zbf|\bhat_z) = \Abf\Exp(\xbf|\bhat_x),
\eeq
and so the linear constraint is satisfied.

To show that $(\bhat_x,\bhat_z)$ are stationary points of the Lagrangian,
we introduce the following notation:
suppose that $V(b)$ is a scalar-valued or vector-valued
functional of a density $b(\ubf)$, and that $\Delta b(\ubf)$
is a perturbation direction of that density.
That is, $\Delta b(\ubf)$ is in the tangent plane of the set of densities, so that  
$\int \Delta b (\ubf) \dif\ubf=0$ and $\Delta b (\ubf) =0$ when $b_0 (\ubf)=0$.
We denote the differential of the functional $V(b)$ in
the direction $\Delta b$ at the point $b=b_0$ by
\[
    \left. \frac{\partial V(b)}{\partial b} \right|_{b=b_0}
    \Cdot \Delta b =
        \lim_{\epsilon \arr 0} \frac{1}{\epsilon} \left[
        V\big( b_0 + \epsilon \Delta b\big) - V(b_0) \right],
\]
which is defined when the limit exists.
See \cite{gelfand2000calculus} for a complete treatment of differentials
of functionals.
Using this notation, we need to show that
\begin{subequations} \label{eq:LSPdbxz}
\beqa
    && \left. \frac{\partial}{\partial b_x} \widetilde{L}_{\rm SP}(b_x,\bhat_z,\sbf)
    \right|_{b_x=\bhat_x} \Cdot \Delta b_x = 0, \label{eq:LSPdbx} \\
    && \left. \frac{\partial}{\partial b_z} \widetilde{L}_{\rm SP}(\bhat_x,b_z,\sbf) \right|_{b_z=\bhat_z} \Cdot \Delta b_z = 0, \label{eq:LSPdbz}
\eeqa
\end{subequations}
for all perturbation directions $\Delta b_x$ and $\Delta b_z$.

To prove \eqref{eq:LSPdbx},
first note that, for any $\Delta b_x$,
the partial derivative of the augmenting term in  \eqref{eq:lagInSP} is given by
\beqa
    \lefteqn{ \left. \frac{1}{2}
    \frac{\partial}{\partial b_x} \left\| \Exp(\xbf|b_x) - \Exp(\xbf|\bhat_x) \right\|^2_{\taubf_r} \right|_{b_x=\bhat_x} \Cdot \Delta b_x} \nonumber \\
    &=
    \left( \Exp(\xbf|\bhat_x) - \Exp(\xbf|\bhat_x)\right)^T \Diag(\taubf_r)^{-1} \nonumber \\
    & \times \frac{\partial}{\partial b_x}
    \Exp(\xbf|\bhat_x) \Cdot \Delta b_x  = 0. \label{eq:dnormz}
\eeqa
Also, since $\bhat_x$ is a minima of \eqref{eq:lagInSP}, it is a stationary
point of the function.  Hence, for any perturbation direction $\Delta b_x$,
\beqa
 \lefteqn{ \frac{\partial}{\partial b_x} \Big[ L_{\rm SP}(b_x,\bhat_z,\taubf_p,\sbf)
    + \frac{1}{2}(\taubf_s)^T\Sbf\var(\xbf|b_x) } \nonumber \\
    & &
    + \frac{1}{2}\left\|\Exp(\xbf|b_x)-\Exp(\xbf|\bhat_x)
        \right\|^2_{\taubf_r} \Bigr]_{b_x=\bhat_x} \Cdot \Delta b_x = 0 \nonumber \\
  &\stackrel{(a)}{\Longleftrightarrow}& \frac{\partial}{\partial b_x} \Big[ L_{\rm SP}(b_x,\bhat_z,\taubf_p,\sbf) \nonumber \\
    && + \frac{1}{2}(\taubf_s)^T\Sbf\var(\xbf|b_x) \Big]_{b_x=\bhat_x} \Cdot \Delta b_x = 0
    \nonumber \\
  &\stackrel{(b)}{\Longleftrightarrow}& \frac{\partial}{\partial b_x} \Big[ L_{\rm SP}(b_x,\bhat_z,\taubf_p,\sbf) \nonumber \\
    && + \frac{\partial }{\partial \taubf_p} L_{\rm SP}(\bhat_x,\bhat_z,\taubf_p,\sbfhat)^T  \taubf_p(b_x) \Big]_{b_x=\bhat_x} \Cdot \Delta b_x = 0
    \nonumber \\
  &\stackrel{(c)}{\Longleftrightarrow}& \frac{\partial}{\partial b_x} \Big[ L_{\rm SP}(b_x,\bhat_z,\taubf_p,\sbf) \Big]_{b_x=\bhat_x} \Cdot \Delta b_x \nonumber \\
    && + \frac{\partial}{\partial \taubf_p} L_{\rm SP}(\bhat_x,\bhat_z,\taubf_p,\sbfhat)^T \frac{\partial}{\partial b_x} \taubf_p(b_x) \Big|_{b_x=\bhat_x} \Cdot \Delta b_x = 0
    \nonumber \\
  &\stackrel{(d)}{\Longleftrightarrow}& \frac{\partial}{\partial b_x} L_{\rm SP}(b_x,\bhat_z,\taubf_p(b_x),\sbf) \Big|_{b_x=\bhat_x} \Cdot \Delta b_x = 0
    \nonumber \\
  &\stackrel{(e)}{\Longleftrightarrow}& \frac{\partial}{\partial b_x} L_{\rm SP}(b_x,\bhat_z,\Sbf\var(\xbf|b_x),\sbf) \Big|_{b_x=\bhat_x} \Cdot \Delta b_x = 0
    \nonumber \\
  &\stackrel{(f)}{\Longleftrightarrow}& \frac{\partial}{\partial b_x} \widetilde{L}_{\rm SP}(b_x,\bhat_z,\sbf) \Big|_{b_x=\bhat_x} \Cdot \Delta b_x = 0,
\eeqa
where (a) follows from \eqref{eq:dnormz};
(b) follows from the fixed points \eqref{eq:taupSP} and \eqref{eq:tausiterSP} and the clarifying notation $\taubf_p(b_x)=\Sbf\var(\xbf|b_x)$;
(c) follows from straightforward calculus;
(d) follows from the multivariable chain rule;
(e) follows from the definition of $\taubf_p(b_x)$; and
(f) follows from the definition of the modified Lagrangian in
\eqref{eq:LtSP}.  This proves \eqref{eq:LSPdbx}. The proof of \eqref{eq:LSPdbz}
is similar.

\bibliographystyle{IEEEtran}
\bibliography{bibl}

\newcommand{\SortNoop}[1]{}
\begin{thebibliography}{10}
\providecommand{\url}[1]{#1}
\csname url@samestyle\endcsname
\providecommand{\newblock}{\relax}
\providecommand{\bibinfo}[2]{#2}
\providecommand{\BIBentrySTDinterwordspacing}{\spaceskip=0pt\relax}
\providecommand{\BIBentryALTinterwordstretchfactor}{4}
\providecommand{\BIBentryALTinterwordspacing}{\spaceskip=\fontdimen2\font plus
\BIBentryALTinterwordstretchfactor\fontdimen3\font minus
  \fontdimen4\font\relax}
\providecommand{\BIBforeignlanguage}[2]{{%
\expandafter\ifx\csname l@#1\endcsname\relax
\typeout{** WARNING: IEEEtran.bst: No hyphenation pattern has been}%
\typeout{** loaded for the language `#1'. Using the pattern for}%
\typeout{** the default language instead.}%
\else
\language=\csname l@#1\endcsname
\fi
#2}}
\providecommand{\BIBdecl}{\relax}
\BIBdecl

\bibitem{RanSRFC:13-ISIT}
S.~Rangan, P.~Schniter, E.~Riegler, A.~Fletcher, and V.~Cevher, ``Fixed points
  of generalized approximate message passing with arbitrary matrices,'' in
  \emph{Proc.\ ISIT}, Jul. 2013, pp. 664--668.

\bibitem{NelWed:72}
J.~A. Nelder and R.~W.~M. Wedderburn, ``Generalized linear models,'' \emph{J.\
  Royal Stat.\ Soc. Series A}, vol. 135, pp. 370--385, 1972.

\bibitem{McCulNel:89}
P.~McCullagh and J.~A. Nelder, \emph{Generalized Linear Models}, 2nd~ed.\hskip
  1em plus 0.5em minus 0.4em\relax Chapman \& Hall, 1989.

\bibitem{RanganFG:12-IT}
S.~Rangan, A.~Fletcher, and V.~K. Goyal, ``Asymptotic analysis of {MAP}
  estimation via the replica method and applications to compressed sensing,''
  \emph{IEEE Trans. Inform. Theory}, vol.~58, no.~3, pp. 1902--1923, Mar. 2012.

\bibitem{Eldar:Book:12}
Y.~C. Eldar and G.~Kutyniok, \emph{Compressed Sensing: Theory and
  Applications}.\hskip 1em plus 0.5em minus 0.4em\relax New York: Cambridge
  Univ. Press, 2012.

\bibitem{ChamDLL:98}
A.~Chambolle, R.~A. DeVore, N.~Y. Lee, and B.~J. Lucier, ``Nonlinear wavelet
  image processing: Variational problems, compression, and noise removal
  through wavelet shrinkage,'' \emph{IEEE Trans. Image Process.}, vol.~7,
  no.~3, pp. 319--335, Mar. 1998.

\bibitem{DaubechiesDM:04}
I.~Daubechies, M.~Defrise, and C.~D. Mol, ``An iterative thresholding algorithm
  for linear inverse problems with a sparsity constraint,'' \emph{Commun. Pure
  Appl. Math.}, vol.~57, no.~11, pp. 1413--1457, Nov. 2004.

\bibitem{VonUnser:04}
C.~Vonesch and M.~Unser, ``Fast iterative thresholding algorithm for
  wavelet-regularized deconvolution,'' in \emph{Proc. SPIE: Wavelet XII}, San
  Diego, CA, 2012.

\bibitem{WrightNF:09}
S.~J. Wright, R.~D. Nowak, and M.~Figueiredo, ``Sparse reconstruction by
  separable approximation,'' \emph{IEEE Trans. Signal Process.}, vol.~57,
  no.~7, pp. 2479--2493, Jul. 2009.

\bibitem{BeckTeb:09}
A.~Beck and M.~Teboulle, ``A fast iterative shrinkage-thresholding algorithm
  for linear inverse problem,'' \emph{SIAM J.\ Imag.\ Sci.}, vol.~2, no.~1, pp.
  183–--202, 2009.

\bibitem{Nesterov:07}
Y.~E. Nesterov, ``Gradient methods for minimizing composite objective
  function,'' \emph{CORE Report}, 2007.

\bibitem{BioDFig:07}
J.~Bioucas-Dias and M.~Figueiredo, ``A new {TwIST}: Two-step iterative
  shrinkage/thresholding algorithms for image restoration,'' \emph{IEEE Trans.
  Image Process.}, vol.~16, no.~12, pp. 2992 -- 3004, Dec. 2007.

\bibitem{BoydPCPE:09}
S.~Boyd, N.~Parikh, E.~Chu, B.~Peleato, and J.~Eckstein, ``Distributed
  optimization and statistical learning via the alternating direction method of
  multipliers,'' \emph{Found. Trends Mach. Learn.}, vol.~3, pp. 1--122, 2010.

\bibitem{Eckstein:MP:92}
J.~Eckstein and D.~Bertsekas, ``On the {D}ouglas-{R}achford splitting method
  and the proximal point algorithm for maximal monotone operators,''
  \emph{Math. Program.}, vol.~5, pp. 293--318, 1992.

\bibitem{Goldstein:JIS:09}
T.~Goldstein and S.~Osher, ``The split {B}regman method for {L1}-regularized
  problems,'' \emph{SIAM J. Imaging Sciences}, vol.~2, no.~2, 2009.

\bibitem{Zhang:JSC:11}
X.~Zhang, M.~Burger, and S.~Osher, ``A unified primal-dual algorithm framework
  based on {B}regman iteration,'' \emph{SIAM J. Sci. Comput.}, vol.~46, pp.
  20--46, 2011.

\bibitem{Combettes:MMS:05}
P.~L. Combettes and V.~R. Wajs, ``Signal recovery by proximal forward-backward
  splitting,'' \emph{Multiscale Model. Simul.}, vol.~4, pp. 1168--1200, 2005.

\bibitem{Tseng:91}
P.~Tseng, ``Applications of a splitting algorithm to decomposition in convex
  programming and variational inequalities,'' \emph{SIAM J. Control and
  Optimization}, vol.~29, no.~1, pp. 119--138, Jan. 1991.

\bibitem{Zhu:Tech:08}
M.~Zhu and T.~Chan, ``An efficient primal-dual hybrid gradient algorithm for
  total variation image restoration,'' UCLA CAM, Tech. Rep. 08-34, 2008.

\bibitem{Esser:Diss:10}
J.~E. Esser, ``Primal dual algorithms for convex models and applications to
  image restoration, registration and nonlocal inpainting,'' Ph.D.
  dissertation, University of California, Los Angeles, 2010.

\bibitem{Chambolle:JMIV:11}
A.~Chambolle and T.~Pock, ``A first-order primal-dual algorithm for convex
  problems with applications to imaging,'' \emph{J. Math. Imaging Vis.},
  vol.~40, pp. 120--145, 2011.

\bibitem{He:JIS:12}
B.~He and X.~Yuan, ``Convergence analysis of primal-dual algorithms for a
  saddle-point problem: From contraction perspective,'' \emph{SIAM J. Imaging
  Sci.}, vol.~5, no.~1, pp. 119--149, 2012.

\bibitem{DonohoMM:09}
D.~L. Donoho, A.~Maleki, and A.~Montanari, ``Message-passing algorithms for
  compressed sensing,'' \emph{Proc. Nat. Acad. Sci.}, vol. 106, no.~45, pp.
  18\,914--18\,919, Nov. 2009.

\bibitem{DonohoMM:10-ITW1}
------, ``Message passing algorithms for compressed sensing {I}: motivation and
  construction,'' in \emph{Proc.\ Info.\ Theory Workshop}, Jan. 2010.

\bibitem{DonohoMM:10-ITW2}
------, ``Message passing algorithms for compressed sensing {II}: analysis and
  validation,'' in \emph{Proc.\ Info.\ Theory Workshop}, Jan. 2010.

\bibitem{BayatiM:11}
M.~Bayati and A.~Montanari, ``The dynamics of message passing on dense graphs,
  with applications to compressed sensing,'' \emph{IEEE Trans. Inform. Theory},
  vol.~57, no.~2, pp. 764--785, Feb. 2011.

\bibitem{Rangan:10-CISS}
S.~Rangan, ``Estimation with random linear mixing, belief propagation and
  compressed sensing,'' in \emph{Proc. Conf. on Inform. Sci. \& Sys.},
  Princeton, NJ, Mar. 2010, pp. 1--6.

\bibitem{Rangan:11-ISIT}
------, ``Generalized approximate message passing for estimation with random
  linear mixing,'' in \emph{Proc. IEEE Int. Symp. Inform. Theory}, Saint
  Petersburg, Russia, Jul.--Aug. 2011, pp. 2174--2178.

\bibitem{CaireSTV:11}
G.~Caire, S.~Shamai, A.~Tulino, and S.~Verd{\'u}, ``Support recovery in
  compressed sensing: Information-theoretic bounds,'' in \emph{Proc. UCSD
  Workshop Inform. Theory \& Its Applications}, La Jolla, CA, Jan. 2011.

\bibitem{BayLelMon:12}
M.~Bayati, M.~Lelarge, and A.~Montanari, ``Universality in polytope phase
  transitions and iterative algorithms,'' in \emph{Proc.\ ISIT}, Jul. 2012, pp.
  1643 --1647.

\bibitem{cakmak2014samp}
B.~{\c{C}}akmak, O.~Winther, and B.~H. Fleury, ``{S-AMP}: Approximate message
  passing for general matrix ensembles,'' in \emph{Proc.\ IEEE Information
  Theory Workshop (ITW)}, 2014, pp. 192--196.

\bibitem{RanSchFle:14-ISIT}
S.~Rangan, P.~Schniter, and A.~Fletcher, ``On the convergence of approximate
  message passing with arbitrary matrices,'' in \emph{Proc.\ ISIT}, Jul. 2014,
  pp. 236--240.

\bibitem{Caltagirone:14-ISIT}
F.~Caltagirone, L.~Zdeborov{\'a}, and F.~Krzakala, ``On convergence of
  approximate message passing,'' in \emph{Proc.\ ISIT}, Jul. 2014, pp.
  1812--1816.

\bibitem{FletcherRVB:11}
A.~K. Fletcher, S.~Rangan, L.~Varshney, and A.~Bhargava, ``Neural
  reconstruction with approximate message passing {(NeuRAMP)},'' in \emph{Proc.
  Neural Information Process. Syst.}, Granada, Spain, Dec. 2011.

\bibitem{chen2010improved}
S.~Chen, H.~Tong, Z.~Wang, S.~Liu, M.~Li, and B.~Zhang, ``Improved generalized
  belief propagation for vision processing,'' \emph{Mathematical Problems in
  Engineering}, vol. 2011, 2010.

\bibitem{KamilovGR:12}
U.~S. Kamilov, V.~K. Goyal, and S.~Rangan, ``Message-passing de-quantization
  with applications to compressed sensing,'' \emph{IEEE Trans. Signal
  Process.}, vol.~60, no.~12, pp. 6270--6281, Dec. 2012.

\bibitem{vila2013hyperspectral}
J.~Vila, P.~Schniter, and J.~Meola, ``Hyperspectral image unmixing via bilinear
  generalized approximate message passing,'' in \emph{SPIE Defense, Security,
  and Sensing}.\hskip 1em plus 0.5em minus 0.4em\relax International Society
  for Optics and Photonics, 2013, pp. 87\,430Y--87\,430Y.

\bibitem{fletcher2014scalable}
A.~K. Fletcher and S.~Rangan, ``Scalable inference for neuronal connectivity
  from calcium imaging,'' in \emph{Proc.\ Neural Information Processing
  Systems}, 2014, pp. 2843--2851.

\bibitem{Vila:ICASSP:15}
J.~Vila, P.~Schniter, S.~Rangan, F.~Krzakala, and L.~Zdeborov{\'a}, ``Adaptive
  damping and mean removal for the generalized approximate message passing
  algorithm,'' in \emph{Proc.\ IEEE ICASSP}, 2015, to appear.

\bibitem{manoel2014swamp}
A.~Manoel, F.~Krzakala, E.~W. Tramel, and L.~Zdeborov{\'a}, ``Sparse estimation
  with the swept approximated message-passing algorithm,''
  \emph{arXiv:1406.4311}, Jun. 2014.

\bibitem{Rangan:arxiv:15}
S.~Rangan, A.~K. Fletcher, P.~Schniter, and U.~S. Kamilov, ``Inference for
  generalized linear models via alternating directions and {B}ethe free energy
  minimization,'' \emph{arXiv:1501.01797}, Jan. 2015.

\bibitem{JavMon:12-arXiv}
A.~Javanmard and A.~Montanari, ``State evolution for general approximate
  message passing algorithms, with applications to spatial coupling,''
  \emph{arXiv:1211.5164 [math.PR].}, Nov. 2012.

\bibitem{Krzakala:14-ISITbethe}
F.~Krzakala, A.~Manoel, E.~W. Tramel, and L.~Zdeborov{\'a}, ``Variational free
  energies for compressed sensing,'' in \emph{Proc.\ ISIT}, Jul. 2014, pp.
  1499--1503.

\bibitem{kabashima2014phase}
Y.~Kabashima, F.~Krzakala, M.~M{\'e}zard, A.~Sakata, and L.~Zdeborov{\'a},
  ``Phase transitions and sample complexity in {B}ayes-optimal matrix
  factorization,'' \emph{arXiv:1402.1298}, 2014.

\bibitem{WainwrightJ:08}
M.~J. Wainwright and M.~I. Jordan, ``Graphical models, exponential families,
  and variational inference,'' \emph{Found. Trends Mach. Learn.}, vol.~1, 2008.

\bibitem{BoutrosC:02}
J.~Boutros and G.~Caire, ``Iterative multiuser joint decoding: Unified
  framework and asymptotic analysis,'' \emph{IEEE Trans. Inform. Theory},
  vol.~48, no.~7, pp. 1772--1793, Jul. 2002.

\bibitem{TanakaO:05}
T.~Tanaka and M.~Okada, ``Approximate belief propagation, density evolution,
  and neurodynamics for {CDMA} multiuser detection,'' \emph{IEEE Trans. Inform.
  Theory}, vol.~51, no.~2, pp. 700--706, Feb. 2005.

\bibitem{GuoW:06}
D.~Guo and C.-C. Wang, ``Asymptotic mean-square optimality of belief
  propagation for sparse linear systems,'' in \emph{Proc. IEEE Inform. Theory
  Workshop}, Chengdu, China, Oct. 2006, pp. 194--198.

\bibitem{Montanari:12-bookChap}
A.~Montanari, ``Graphical model concepts in compressed sensing,'' in
  \emph{Compressed Sensing: Theory and Applications}, Y.~C. Eldar and
  G.~Kutyniok, Eds.\hskip 1em plus 0.5em minus 0.4em\relax Cambridge Univ.
  Press, Jun. 2012, pp. 394--438.

\bibitem{Minka:01}
T.~P. Minka, ``A family of algorithms for approximate {B}ayesian inference,''
  Ph.D. dissertation, Massachusetts Institute of Technology, Cambridge, MA,
  2001.

\bibitem{Seeger:08}
M.~Seeger, ``{B}ayesian inference and optimal design for the sparse linear
  model,'' \emph{J. Machine Learning Research}, vol.~9, pp. 759--813, Sep.
  2008.

\bibitem{Tibshirani:96}
R.~Tibshirani, ``Regression shrinkage and selection via the lasso,'' \emph{J.
  Royal Stat. Soc., Ser. B}, vol.~58, no.~1, pp. 267--288, 1996.

\bibitem{ParikhB:13}
N.~Parikh and S.~Boyd, ``Proximal algorithms,'' \emph{Found. Trends Optimiz.},
  vol.~3, no.~1, pp. 123--231, 2013.

\bibitem{DonohoJM:11arXiv}
D.~Donoho, I.~Johnstone, and A.~Montanari, ``Accurate prediction of phase
  transitions in compressed sensing via a connection to minimax denoising,''
  arXiv:1111.1041v1 [cs.IT]., Nov. 2011.

\bibitem{WeissFree:01}
Y.~Weiss and W.~T. Freeman, ``{On the optimality of solutions of the
  max-product belief-propagation algorithm in arbitrary graphs},'' \emph{IEEE
  Trans. Inform. Theory}, vol.~47, no.~2, pp. 736--744, Feb. 2001.

\bibitem{YedidiaFW:03}
J.~S. Yedidia, W.~T. Freeman, and Y.~Weiss, ``Understanding belief propagation
  and its generalizations,'' in \emph{Exploring Artificial Intelligence in the
  New Millennium}.\hskip 1em plus 0.5em minus 0.4em\relax San Francisco, CA:
  Morgan Kaufmann Publishers, 2003, pp. 239--269.

\bibitem{Heskes:03}
T.~Heskes, ``{Stable fixed points of loopy belief propagation are minima of the
  Bethe free energy},'' in \emph{Proc. Neural Information Process. Syst.},
  Vancouver, Canada, Dec. 2003.

\bibitem{KrzMSSZ:11-arxiv}
F.~Krzakala, M.~M{\'e}zard, F.~Sausset, Y.~Sun, and L.~Zdeborov\'a,
  ``Statistical physics-based reconstruction in compressed sensing,''
  \emph{arXiv:1109.4424}, Sep. 2011.

\bibitem{Vila:TSP:13}
J.~P. Vila and P.~Schniter, ``Expectation-maximization {G}aussian-mixture
  approximate message passing,'' \emph{IEEE Trans.\ Signal Processing},
  vol.~61, no.~19, pp. 4658--4672, Oct. 2013.

\bibitem{Vila:TSP:14}
------, ``An empirical-{Bayes} approach to recovering linearly constrained
  non-negative sparse signals,'' \emph{IEEE Trans.\ Signal Processing},
  vol.~62, no.~18, pp. 4689--4703, Sep. 2014, (see also
  \emph{arXiv:1310.2806}).

\bibitem{KamRanFU:12-nips}
U.~S. Kamilov, S.~Rangan, A.~K. Fletcher, and M.~Unser, ``Approximate message
  passing with consistent parameter estimation and applications to sparse
  learning,'' in \emph{Proc.\ NIPS}, Lake Tahoe, NV, Dec. 2012.

\bibitem{KamRanFU:12-IT}
------, ``Approximate message passing with consistent parameter estimation and
  applications to sparse learning,'' \emph{IEEE Trans. Info. Theory}, vol.~60,
  no.~5, pp. 2969 -- 2985, Apr. 2014.

\bibitem{SomS:12}
S.~Som and P.~Schniter, ``Compressive imaging using approximate message passing
  and a {M}arkov-tree prior,'' \emph{IEEE Trans. Signal Process.}, vol.~60,
  no.~7, pp. 3439--3448, Jul. 2012.

\bibitem{RanganFGS:12-ISIT}
S.~Rangan, A.~K. Fletcher, V.~K. Goyal, and P.~Schniter, ``Hybrid generalized
  approximation message passing with applications to structured sparsity,'' in
  \emph{Proc. IEEE Int. Symp. Inform. Theory}, Cambridge, MA, Jul. 2012, pp.
  1241--1245.

\bibitem{RanganF:12-ISIT}
S.~Rangan and A.~K. Fletcher, ``Iterative estimation of constrained rank-one
  matrices in noise,'' in \emph{Proc. IEEE Int. Symp. Inform. Theory},
  Cambridge, MA, Jul. 2012.

\bibitem{parker2013bilinear}
J.~Parker, P.~Schniter, and V.~Cevher, ``Bilinear generalized approximate
  message passing---{Part I: D}erivation,'' \emph{IEEE Trans.\ Signal
  Processing}, vol.~62, no.~22, pp. 5839 -- 5853, 2013.

\bibitem{parker2013bilinear2}
------, ``Bilinear generalized approximate message passing---{Part II:
  A}pplications,'' \emph{IEEE Trans.\ Signal Processing}, vol.~62, no.~22, pp.
  5854--5867, 2013.

\bibitem{krzakala2013phase}
F.~Krzakala, M.~M{\'e}zard, and L.~Zdeborov{\'a}, ``Phase diagram and
  approximate message passing for blind calibration and dictionary learning,''
  in \emph{Proc. IEEE ISIT}, 2013, pp. 659--663.

\bibitem{lesieur2015mmse}
T.~Lesieur, F.~Krzakala, and L.~Zdeborov{\'a}, ``{MMSE} of probabilistic
  low-rank matrix estimation: Universality with respect to the output
  channel,'' \emph{arXiv:1507.03857}, 2015.

\bibitem{parker2015bilinear}
J.~Parker, Y.~Shou, and P.~Schniter, ``Parametric bilinear generalized
  approximate message passing,'' \emph{arXiv:1508.07575}, 2015.

\bibitem{HacLouNaj:07}
W.~Hachem, P.~Loubaton, and J.~Najim, ``Deterministic equivalents for certain
  functionals of large random matrices,'' \emph{Ann. Applied Probability},
  vol.~17, no.~3, pp. 875--930, Jun. 2007.

\bibitem{gelfand2000calculus}
I.~M. Gelfand and S.~V. Fomin, \emph{Calculus of Variations}.\hskip 1em plus
  0.5em minus 0.4em\relax Courier Corporation, 2000.

\end{thebibliography}

\end{document}